\newcommand{\BR}{\tt{BR}}
\newcommand{\FO}{\ensuremath{\mathsf{FO}}}
\newcommand{\FOML}{\ensuremath{\mathsf{FOML}}}
\newcommand{\Bundle}{\mathsf{BFOML}}
\newcommand{\BundleEB}{\mathsf{BFOML^{\exists\Box}}}
\newcommand{\BundleED}{\mathsf{BFOML^{\exists\Diamond}}}
\newcommand{\PSPACE}{\ensuremath{\mathsf{PSPACE}}}
\newcommand{\END}{\tt{END}}
\newcommand{\fixed}{C}
\newcommand{\Var}{\textsf{Var}}
\newcommand{\Sub}{\textsf{Sub}}
\newcommand{\FV}{\textsf{Fv}}
\newcommand{\Ps}{\textbf{P}}
\newcommand{\existsBox}[1]{\exists #1 \Box}
\newcommand{\existsDiamond}[1]{\exists #1 \Diamond}
\newcommand{\forallBox}[1]{\forall #1 \Box}
\newcommand{\forallDiamond}[1]{\forall #1 \Diamond}
\begin{document}

\begin{frontmatter}
  \title{Bundled fragments of first-order modal logic: (un)decidability}
  \author{Anantha Padmanabha }
  \author{R. Ramanujam}
  \address{Institute of Mathematical Sciences,
  \\ Homi Bhabha National Institute, Chennai}
   \author{Yanjing Wang}
  \address{Department of Philosophy, Peking University, Beijing}
  
  \begin{abstract}
Quantified modal logic provides a natural logical language for reasoning about modal attitudes even while retaining the richness of quantification for referring to predicates over domains. But then most fragments of the logic are undecidable, over many model classes. Over the years, only a few fragments (such as the monodic) have been shown to be decidable. In this paper, we study fragments that bundle quantifiers and modalities together, inspired by earlier work on epistemic logics of know-how/why/what. As always with quantified modal logics, it makes a significant difference whether the domain stays the same across worlds, or not. In particular, we show that the bundle $\forall \Box$ is undecidable over constant domain interpretations, even with only monadic predicates, whereas $\exists \Box$ bundle is decidable. On the other hand, over increasing domain interpretations, we get decidability with both $\forall \Box$ and $\exists \Box$ bundles with unrestricted predicates. In these cases, we also
obtain tableau based procedures that run in \PSPACE. We further show that the $\exists \Box$ bundle cannot distinguish between constant domain and variable domain interpretations.
  \end{abstract}
 \begin{keyword}
  First-order modal logic, decidability, bundled fragments.
  \end{keyword}
 \end{frontmatter}

\section{Introduction}

In \textit{Meaning and Necessity} \cite{Carnap1947}, Carnap  remarked: 
\begin{quote} \textit{Any system of modal logic without quantification is of interest 
only as a basis for a wider system including quantification. If such a wider system 
were found to be impossible, logicians would probably abandon modal logic entirely.}
\end{quote}
However, it seems that history  went exactly the other way around. Compared to the 
flourishing developments of propositional modal logic in the past decades with successful 
applications in various other fields, first-order modal logic ($\FOML$) is much less studied. 
In addition to numerous philosophical controversies, $\FOML$ is also infamously hard to 
handle technically: e.g., you often loose good properties of first-order logic and modal 
propositional logic when putting them together. 

Among those technical hurdles, finding useful decidable fragments of $\FOML$  has been 
a major one preventing the use of $\FOML$ in computational applications. On the one
hand, the decidable fragments of first-order logic have been well mapped out during 
the last few decades. On the other hand, we have a thorough understanding of the robust 
decidability of propositional modal logics. However, when it comes to finding decidable 
fragments of $\FOML$, the situation seems quite hopeless: even the {\em two-variable fragment 
with one single monadic predicate} is (robustly) undecidable over almost all useful model 
classes \cite{RybakovS17a}. On the positive side, besides the severely restricted one-variable fragment, the 
only promising approach has so far come from the study of the so-called \textit{monodic 
fragment}, which requires that there be at most one free variable in the scope of any 
modal subformula. Combining the monodic restriction with a decidable fragment of $\FO$  we 
often obtain decidable fragments of $\FOML$, as Table \ref{table: FOML} shows (results mostly hold for the usual frame classes e.g., T, S4, $\dots$ ). 

\begin{table}[!htbp]
\label{table: FOML}
\begin{center}
\begin{tabular}{|l|l|l|l|l|}
\hline
Language & Model &  Decidability & Ref\\
\hline 
 $P^1$ & cons-D,  & undecidable &  \cite{Kripke62}\\
 $x,y$, $p$, $P^1$& inc\slash cons-D & undecidable &  \cite{kontchakov2005,gabbay1993}\\
$x,y$, $\Box_i$, single $P^1$ &  inc\slash cons-D  & undecidable & 
\cite{RybakovS17a}
\\
single $x$& inc\slash cons-D & decidable &   \cite{Segerberg73,fischer1978finite}\\
$x,y$\slash$P^1$\slash GF, $\Box_i(x)$ &  inc\slash cons-D  & decidable & \cite{Mono01} \\
\hline
\end{tabular}
\end{center}
\caption{$x,y$ refers to the two-variable fragment, $P^1$ refers to unary predicates. Inc and cons-D refers to increasing domain and constant domain $\FOML$ structures. GF is the guarded fragment. $\Box_i$ is multi-modal logic. $\Box_i (x)$ refers to having only $1$ free variable inside the modality (monodic fragment).
} 
\end{table}

The reason behind this sad tale is not far to seek: the addition of $\Box$ gives implicitly 
an extra quantifier, over a fresh variable. Thus if we consider the two-variable fragment of 
\FOML, with only unary predicates in the syntax, we can use $\Box$ to code up binary relations
and we ride out of the two-variable fragment as well as the monadic fragment of FOL.  The 
monodic restriction confines the use of $\Box$ significantly so that it cannot introduce
a fresh variable implicitly.

It is then natural to ask: apart from variable restrictions, is there some other way to 
obtain syntactic fragments of $\FOML$ that are yet decidable ?

One answer came, perhaps surprisingly, from epistemic logic. In recent years, interest 
has grown in studying epistemic logics of knowing-how, knowing-why, knowing-what, and 
so on (see \cite{WangBKT} for a survey). As observed in \cite{WangBKT}, most of the new 
epistemic operators essentially share a unified \textit{de re} semantic schema of 
$\exists x \Box$ where $\Box$ is an epistemic modality.\footnote{Note that the quantifier 
is not necessarily first-order.} For instance, knowing how to achieve $\phi$ means that 
there exists a mechanism which you know such that executing it will make sure that you end in
a $\phi$ state \cite{Wang17}. Here the distinction between $\exists \Box$ and $\Box \exists$ is crucial. It is also observed that such logics are often decidable. This leads 
to the proposal in \cite{Wang17d} of a new fragment of \FOML by {\sf packing} $\exists$ 
and $\Box$ into a \textit{bundle} modality, but without any restriction on predicates 
or the occurrences of variables. 
$$\phi::= P\overline{x} \mid \neg\phi \mid (\phi\land\phi)\mid \existsBox{x}\phi$$
Note that in this language, quantifiers have to always come with modalities. If $x$ does not appear in $\phi$ then $\exists x\Box$ is simply equivalent to the usual $\Box$. Such a language suffices to say many interesting things besides knowing-how such as:
\begin{itemize}
\item I do not know who killed Mary: $\neg \exists x \Box \textit{kill}(x, \textit{Mary})$
\item I know a theorem such that I do not know any proof: 
$\exists x \Box \neg \exists y \Box \textit{prove}(x,y)$
\item For each person I think it is possible that I know someone who is a friend of him or her: $\forall x \Diamond \exists y \Box \textit{friend}(x,y)$ (note that $\forall x \Diamond $ is $\neg \exists x\Box \neg$)
\item I know a public key whose corresponding private key is known to agent $j$ but not to me: $\exists x \Box_i (\exists y \Box_j \textit{key}(x, y) \land \neg \exists y \Box_i \textit{key}(x, y))$.
\end{itemize}

It is shown that this fragment with arbitrary predicates is in fact \PSPACE-complete 
over increasing domain models. Essentially, the idea is similar to the ``secret of 
success'' of modal logic: \textit{guard the quantifiers}, now with a modality. On the 
other hand, the same fragment is undecidable over S5 models, and this can be shown
by coding first-order sentences in this language using S5 properties.    

There are curious features to observe in this tale of (partial) success. The fragment in 
\cite{Wang17d} includes the $\exists\Box$ bundle but not its companion $\forall \Box$ 
bundle, and considers only increasing domain models. The latter observation is
particularly interesting when we notice that S5 models, where the fragment becomes
undecidable, force constant domain semantics.

The last distinction is familiar to first-order modal logicians, but might come
across as a big fuss to others. Briefly, it is the distinction between a {\em possibilist}
approach and an {\em actualist} approach. In the former, the model has one fixed
domain for all possible worlds, and quantification extends over the domain (rather
than only those objects that exist in the current world). This is the constant domain
semantics. In the latter approach, each possible world has its own domain, and
quantification extends only over objects that exist in the current world. In increasing
domain semantics, once an object exists in a world $w$, it exists in worlds accessible
from $w$.

Given such subtlety, it is instructive to consider more general bundled fragments
of $\FOML$, including both $\exists\Box$ and $\forall \Box$ as the natural first step, and study them over constant
domain as well as increasing domain models. This is precisely the project undertaken
in this paper, and the results are summarized in Table 2.

\begin{table}[!ht]
\label{tab:bundle}
\begin{center}
\begin{tabular}{|l|l|l|l|}
\hline
Language & Domain& Decidability &  Remark\\
\hline 
$\forall\Box$, $P^1$ & constant& undecidable & \\
\hline
$\exists\Box$, $P$  &constant& decidable & \PSPACE-complete\\
\hline
$\exists\Box, \forall\Box,\ P$& increasing& decidable&  \PSPACE-complete \\ 
\hline
\end{tabular}
\end{center}
\caption{Satisfiability problem classification for Bundled $\FOML$ fragment}
\end{table}

As we can see, the $\exists \Box$ bundle behaves better computationally than the 
$\forall \Box$ bundle. For $\forall\Box$, even the monadic fragment is undecidable over 
constant domain models: we can encode in this language, qua satisfiability, any first-order 
logic sentence with binary predicates by exploiting the power of $\forall \Box$. A 
straightforward consequence is that the $\Box\forall$ fragment is also undecidable 
over constant domain models.

On the other hand, we can actually give a tableau method for the $\exists\Box$ and $\forall\Box$ fragment together, 
similarly as the tableau in \cite{Wang17d}, for increasing domain models. The crucial 
observation is that such models allow us to manufacture new witnesses for 
$\forall x \Diamond$ and $\exists x \Diamond$ formulas, giving considerable
freedom in model construction, which is not available in constant domain models.

Indeed, the well-behavedness of the $\exists\Box$ bundle is further attested to
by the fact that it is decidable over constant domain models as well. So constant
domain is not the culprit for undecidability of this fragment over S5 models. Actually, we can show the $\exists \Box$ bundle does not distinguish increasing domain models and constant domain models. 

The paper is structured as follows. After formal definitions of bundled fragments,
we present undecidability results and then move on to tableaux procedures for the 
decidable fragments. We then show that the validities of $\exists\Box$ over 
increasing domain are exactly the same as its validities over constant domain 
models, and end the paper with a re-look at mapping the terrain of these fragments.

\section{The bundled fragment of $\FOML$}
\label{sec: syntax and semantics}
Let $\Var$ be a countable set of variables, and $\Ps$ be a fixed set of predicate symbols,
with  $\Ps^n \subseteq \Ps$ denoting the set of all predicate symbols of arity $n$.  We 
use $\overline{x}$ to denote a finite sequence of (distinct) variables in $\Var$.  We only
consider the `pure' first order unimodal logic: that is, the vocabulary is restricted to 
$\Var$ (no equality and no constants and no function symbols).

\begin{definition}
\label{def: bundle syntax}
Given  $\Var$ and $\Ps$,  the bundled fragment of $\FOML$ denoted by $\Bundle$ is defined 
as follows:
$$ \phi::= P\overline{x} \mid \neg\phi \mid (\phi\land\phi)\mid \existsBox{x}\phi  \mid \existsDiamond{x} \phi$$
where $x \in \Var$, $P\in \Ps$. We denote the fragment $\BundleEB$ to be the formulas which 
contains only $\existsBox{ }$ formulas and $\BundleED$ which contains only $\existsDiamond{ }$ 
formulas. 
\end{definition}

$\top, \bot, \lor, \Rightarrow$ (True, False, Or and Implies) are defined in the 
standard way. $\forallDiamond {x} \phi$ as $\neg \existsBox {x} \neg\phi$ is the 
dual of $\existsBox{x}\phi$, and $\forallBox{x}\phi$ defined by $\neg \existsDiamond{x} \neg\phi$ 
is the dual of $\existsBox{x}\phi$. With both bundles we can say, in an epistemic setting, that for each country I know its capital $\forall x\Box \exists y \Box \textit{Capital}(x,y).$ 

The \textit{free} and \textit{bound} occurrences of variables are defined as in first-order logic, 
by viewing $\existsBox{x}$ and $\existsDiamond{x}$ as quantifiers. We denote $\FV(\phi)$ as the set of free variables of $\phi$. We write $\phi(\overline{x})$ 
if all the free variables in $\phi$ are included in $\overline{x}$. Given a $\Bundle$ formula 
$\phi$ and $x, y\in \Var$, we write $\phi[y\slash x]$ for the formula obtained by replacing 
every free occurrence of $x$ by $y$. A formula is said to be a {\em sentence} if it contains
no free variables.

The semantics is the standard \textit{increasing domain} semantics of $\FOML$.
The $\vDash$ relation is specialized to the $\Bundle$ fragment.
 
\begin{definition}
\label{def: bundle sematnics}
An (increasing domain) model $M$ for $\Bundle$ is a tuple $(W, D, \delta, R, \rho)$ where, 
$W$ is a non-empty set of worlds, $D$ is a non-empty domain, $R\subseteq (W\times W)$,
$\delta:W\to 2^D$ assigns to each $w\in W$ a \textit{non-empty} local domain s.t. 
$wRv$ implies $\delta(w)\subseteq \delta(v)$ for any $w,v\in W$, and
$\rho: W\times \Ps \to \bigcup_{n\in \omega}2^{D^n}$ such that $\rho$ assigns to
each $n$-ary predicate on each world an $n$-ary relation on $D$.
\end{definition}

We often write $D_w$ for $\delta(w)$.  A \textit{constant domain} model is one where
$D_w=D$ for any $w\in W$.  A \textit{finite model} is one with both $W$ finite and $D$ 
finite. 

Consider a model $M = (W, D, \delta, R, \rho)$, $w \in W$. To interpret free 
variables, we also need a variable assignment $\sigma: \Var\to D$.  Call $\sigma$
{\em relevant} at $w \in W$ if $\sigma(x)\in \delta^M(w)$ for all $x\in \Var$. The
increasing domain condition ensures that whenever $\sigma$ is relevant at $w$ and
we have $wRv$, then $\sigma$ is relevant at $v$ as well. (In a constant domain
model, every assignment $\sigma$ is relevant at all the worlds.)

Given $M = (W, D, \delta, R, \rho)$, $w \in W$, and an assignment $\sigma$ 
relevant at $w$, define $M,w,\sigma \vDash \phi$ inductively as follows:
$$\begin{array}{|lcl|}
\hline
M, w, \sigma\vDash P(x_1\cdots x_n) &\Leftrightarrow & (\sigma(x_1), \cdots, \sigma(x_n))\in \rho(P,w)  \\ 
M, w, \sigma\vDash \neg\phi &\Leftrightarrow&   M, w, \sigma\nvDash \phi \\ 
M, w, \sigma\vDash (\phi\land \psi) &\Leftrightarrow&  M, w, \sigma\vDash \phi \text{ and } M, w, \sigma\vDash \psi \\ 
M, w, \sigma\vDash \existsBox{x} \phi &\Leftrightarrow& \text{there is some $d\in \delta(w)$ such that }\\
&&M, v, \sigma[x\mapsto d]\vDash\phi \text{ for all $v$ s.t.\ $wRv$}\\
M, w, \sigma\vDash \existsDiamond{x} \phi &\Leftrightarrow& \text{there is some $d\in \delta(w)$ and some $v\in W$ }\\
&&\text{ such that } wRv \text{ and } M, v, \sigma[x\mapsto d]\vDash\phi \\
\hline
\end{array}$$

\noindent where $\sigma[x\mapsto d]$ denotes another assignment that is the same as $\sigma$ 
except for mapping $x$ to $d$. 

It is easily verified that $M,w,\sigma \vDash \phi$ is defined only when $\sigma$ is 
relevant at $w$. In general, when considering the truth of $\phi$ in a model, it
suffices to consider $\sigma: \FV(\phi) \to D$, assignment restricted to the free variables
occurring free in $\phi$.  When $\FV(\phi) = \{x_1, \ldots, x_n\}$ and $\{d_1, \ldots, d_n\}
\subseteq D$, We  write $M, w\vDash \phi [\overline{d}]$ to denote $M,w,\sigma\vDash 
\phi(\overline{x})$ for any $\sigma$ such that and for all $i\le n$ we have $\sigma(x_i) = d_i$. 
Hence when $\phi$ is a sentence, we can simply write $M,w \models \phi$.

We say $\phi$ is \textit{valid}, if $\phi$ is true on any $M, w$ w.r.t. any $\sigma$ relevant
at $w$.  $\phi$ is \textit{satisfiable} if $\neg \phi$ is not valid.

\section{Undecidability results}

In this section we prove that the satisfiability problem for the $\BundleED$ fragment 
with constant domain semantics is undecidable even when the atomic predicates are 
restricted to be unary. We prove this by reduction from the  satisfiability problem for 
$\FO$ with one arbitrary binary predicate, which is known to be undecidable (from \cite{godel1933}).

That full $\FOML$ with constant domain semantics is undecidable even when the atomic 
predicates are only unary is well known; it was shown by  Kripke\cite{Kripke62}. That we need
only 2 variables along with propositions to make Monadic $\FOML$ undecidable was shown
by Gabbay and Shehtman \cite{gabbay1993}. That propositions can be eliminated was observed
by Kontchakov, Kurucz and Zakharyaschev  \cite{kontchakov2005}. 

Consider $\FO(R)$, the first order logic with only variables as terms and no equality,
and the single binary predicate $R$. To translate $\FO(R)$ sentences to $\BundleED$ formulas,
we use two unary predicate symbols $p,q$ in the latter. The main idea is that the atomic
formula $R(x,y)$ is coded up as the $\FOML$ formula $\exists z \Diamond \big( p(x) \land q(y)\big)$,
where $z$ is a new variable, distinct from $x$ and $y$.\footnote{This is similar to the approach used by Kripke \cite{Kripke62}, specialized for the $\BundleED$ fragment.} In the model constructed, it will
turn out that $R = P \times Q$. But in which world is this to be enforced ? We will enforce
that all worlds at a specific modal depth interpret $R$ in the same way, thus ruling out
any ambiguity, crucially using the $\forall \Box$ bundle and constant domain semantics.

For any {\em quantifier free} $\FO(R)$ formula $\alpha$, we define the translation 
of $\alpha$ to $\BundleED$ formula $\phi_\alpha$ inductively as follows. 

\begin{itemize}
\item[-] $\phi_{R(x,y)} ::= \exists z \Diamond \big( p(x) \land q(y)\big)$, where $z$ is
distinct from $x$ and $y$. 
\item[-] $\phi_{\neg \alpha} ::= \neg \phi_{\alpha}$ and
$\phi_{\alpha_1 \land \alpha_2} ::= \phi_{\alpha_1} \land \phi_{\alpha_2}$.
\end{itemize}

Now consider an $\FO(R)$ sentence $\alpha$ (having no free variables) and presented in 
prenex form: $Q_1 x_1\ Q_2 x_2 \cdots Q_n x_n (\beta)$ where $\beta$ is quantifier free.
We define $\psi_\alpha$ to be the conjunction of the following three sentences:

\begin{itemize}
\item[-] $\psi_1 ::= Q_1 x_1 \Delta_1\  Q_2 x_2 \Delta_2\ \cdots Q_n x_n\Delta_n\  (\phi_{\beta})$ \\
where $Q_i x_i \Delta_i :=  \existsDiamond{x_i}$ if $Q_i = \exists$ and
$Q_i x_i \Delta_i := \forallBox{x_i}$ if $Q_i = \forall$.

\item[-] $\psi_2 ::= \forallBox{z_1}\forallBox{z_2}\big( (\existsDiamond{z})^{n}(\exists z\Diamond (p(z_1) \land q(z_2))) \Rightarrow (\forallBox{z})^n(\exists z\Diamond (p(z_1) \land q(z_2))) \big)$.

\item[-] $\psi_3 ::= \bigwedge_{j=1}^{n+2}(\forall\Box{z})^{j}\exists z\Diamond \top$.
\end{itemize}

Of these, $\psi_1$ ensures that the formulas are interpreted over the same domain, and that
the meaning of $R$ is given as $P \times Q$ in a world at depth $n+1$. $\psi_2$ ensures that
all worlds at depth $n+3$ agree on $p$ and $q$ and hence on $R$. $\psi_3$ asserts that every
path can be extended until depth $n+2$, one never gets stuck earlier.

The role of dummy variables $z_1$ and $z_2$ in $\psi_2$ and depth $n+2$ in $\psi_3$
may need an explanation. First note that the interpretation for $R$ is collected at 
depth $n+1$, and the extra depth is because the extra quantification in the coding of 
$R$ added successor worlds. Now we need these two variables to refer to elements of 
$p$ and $q$ at depth $n+1$, but in the bundled fragment, any variable comes packed 
with a modality. Thus we get depth $n+3$. Further we could not use variables from
$\alpha$ (which might be quantified existentially), so fresh variables are needed.
  
\begin{theorem}
\label{thm: bundle-ED undecidability on constant domain}
The $\FO(R)$ sentence $\alpha$ is satisfiable iff the $\BundleED$ sentence $\psi_\alpha$ is
constant domain satisfiable.
\end{theorem}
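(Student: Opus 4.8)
The plan is to establish the two directions separately, with the forward direction (from an $\FO(R)$ model to a constant-domain $\BundleED$ model) being the more constructive one, and the backward direction being an extraction argument.

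For the forward direction, suppose $N \models \alpha$ where $N$ has domain $A$ and interprets $R$ as some $R^N \subseteq A \times A$. I would build a tree-like constant-domain model $M$ of modal depth $n+3$ with domain $D = A$ (plus possibly one auxiliary element to keep local domains nonempty, though constant domain makes this automatic once $A \neq \emptyset$). The root $w_0$ has a full $(n+3)$-branching tree beneath it so that every path survives to depth $n+2$ (this is what $\psi_3$ demands); the key design choice is how to interpret $p$ and $q$. I would pick, at every world $v$ of depth exactly $n+1$ reachable via a branch that "selects" a pair $(a,b)$ through the coding, the interpretation so that $p^M(v) \ni a$-type witnesses and $q^M(v) \ni b$-type witnesses exactly when $(a,b) \in R^N$; more carefully, since $R(x,y)$ is coded as $\existsDiamond{z}(p(x)\wedge q(y))$, I need: for the assignment values $d_x, d_y \in D$ of the free variables of $\beta$, there is a world at the appropriate depth satisfying $p(d_x) \wedge q(d_y)$ iff $(d_x,d_y) \in R^N$. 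The quantifier prefix $Q_1 x_1 \Delta_1 \cdots Q_n x_n \Delta_n$ walks down the tree one modal level per quantifier, using $\existsDiamond{}$ to pick a successor (an $\exists$-witness) and $\forallBox{}$ to range over all successors (the $\forall$ case); since the domain is constant, the element chosen for $x_i$ is available at all deeper worlds, so the translation $\phi_\beta$ evaluated at depth $n$ faithfully mirrors $\beta$ in $N$. Then $\psi_2$ is arranged to hold by making all worlds at depth $n+3$ that lie below a common depth-$(n+1)$ world agree on $p,q$ — or more simply, by making the $p,q$-interpretation at depth $n+1$ already fully determined, so the implication in $\psi_2$ is vacuously satisfied in the right way. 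I would then verify $M, w_0 \models \psi_1 \wedge \psi_2 \wedge \psi_3$ by a straightforward induction.

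For the backward direction, suppose $M, w \models \psi_\alpha$ with $M$ a constant domain model. Set $A := D$ (the common domain) and define $R^N := \{(a,b) : M, w \models (\existsDiamond{z})^{n}\,\exists z \Diamond(p(z_1)\wedge q(z_2))\,[a/z_1, b/z_2]\}$ — i.e., read off $R$ from whether the coding formula holds at the root with the worlds at depth $n+1$ as witnesses. The content of $\psi_2$ is precisely that this reading is unambiguous: all the depth-$(n+3)$ worlds below any depth-$(n+1)$ world agree on $p$ and $q$, so the coding formula $\exists z \Diamond(p(z_1)\wedge q(z_2))$ has a stable truth value across the relevant worlds, and hence $R(x,y)$ is coded coherently "in a world at depth $n+1$" as the prose claims. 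The role of $\psi_3$ is to guarantee the witnessing paths actually exist down to depth $n+2$ so that the $\forallBox{}$ subformulas of $\psi_1$ are not vacuously true for the wrong reason. Then I would prove, by induction on the structure of the prenex formula $\alpha$ (peeling off one quantifier at a time and tracking the corresponding modal step), that $N \models \alpha$; the base case is the equivalence, for any assignment, between $M, v \models \phi_\beta$ at a depth-$n$ world $v$ and $N \models \beta$ under the corresponding $A$-assignment, which reduces to the atomic case $R(x,y) \leftrightarrow \phi_{R(x,y)}$ handled via the definition of $R^N$ and $\psi_2$.

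The main obstacle I expect is the bookkeeping around modal depths and the interaction of the auxiliary variables $z_1, z_2$ with the quantified variables of $\alpha$: one must check that when $\psi_1$ has walked down $n$ levels assigning $d_1, \dots, d_n$ to $x_1, \dots, x_n$, the atomic subformula $\phi_{R(x_i,x_j)} = \existsDiamond{z}(p(x_i) \wedge q(x_j))$ evaluated at that depth-$n$ world agrees with $R^N(d_i, d_j)$ as read off at the root — and these are a priori different worlds at different depths. This is exactly where constant domain is essential ($d_i, d_j$ must be legal domain elements everywhere) and where $\psi_2$ does its work (forcing the $p,q$-pattern to be depth-independent in the relevant range so the two readings coincide). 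Getting the off-by-constant depth counting exactly right ($n+1$ for collecting $R$, $n+2$ for non-stuck paths, $n+3$ for the agreement layer) is the fiddly part, but it is routine once the picture is set up; I would present it as a single careful lemma stating the depth-$k$ invariant and then induct.
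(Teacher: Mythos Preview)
Your backward direction is essentially the paper's: read off $R$ from the $p,q$-pattern at the appropriate depth, use $\psi_2$ to show the reading is path-independent, use $\psi_3$ to ensure paths exist, and then induct on the prefix. That part is fine.

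The forward direction, however, has a real gap. You propose a ``full branching tree'' and say you will set $p,q$ ``at every world $v$ of depth exactly $n{+}1$ reachable via a branch that selects a pair $(a,b)$.'' But nothing in the quantifier prefix selects a pair: the prefix assigns values to $x_1,\dots,x_n$, and the atom $\phi_{R(x_i,x_j)}=\existsDiamond{z}(p(x_i)\land q(x_j))$ must then be evaluated at whatever depth-$n$ world you land in, for arbitrary pairs $(d_i,d_j)$. Your fallback, ``make the $p,q$-interpretation at depth $n{+}1$ already fully determined,'' cannot work either: if all relevant successors carry the same sets $P,Q$, then $\existsDiamond{z}(p(a)\land q(b))$ holds iff $a\in P$ and $b\in Q$, so you can only encode relations of the form $P\times Q$, not an arbitrary $R^N$. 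And if instead different branches carry different $p,q$, you owe an argument that $\psi_2$ still holds and that the $\forallBox{}$-steps in $\psi_1$ see the same truth value on every branch; your plan does not supply one.

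The paper sidesteps all of this with a much simpler model: a single \emph{path} $v_1\to v_2\to w_1\to\cdots\to w_n$, with branching only at $w_n$, where each successor $u_d$ (one per $d\in D$) carries $p=\{d\}$ and $q=\{c:(d,c)\in I\}$. On a path $\Box$ and $\Diamond$ coincide, so each $\existsDiamond{x_i}$ and each $\forallBox{x_i}$ just moves one step forward while the quantifier does its honest first-order work on the constant domain; the only modal content is at the final branching step, where the family $\{u_d\}$ encodes $R$ exactly. This also makes $\psi_2$ trivially true (there is only one path). You should replace your branching tree with this path-plus-fan construction; the induction you sketch then goes through cleanly.
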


\begin{proof}
We sketch the proof here, the details are given in Appendix A.
Fix $\alpha ::= Q_1 x_1 \cdots Q_n x_n \beta$, where $\beta$ is quantifier free. 
To prove $(\Rightarrow)$, assume that $\alpha$ is satisfiable. Let $D$ be some
domain such that $(D,I) \models \alpha$ where $I \subseteq (D \times D)$ is the 
interpretation for $R$.

Define $M = (W,R,D,\delta,\rho)$ where: 
\begin{itemize}
\item[ ]$W = \{ v_1, v_2 \} \cup \{w_i \mid 1 \le i \le n\} \cup \{ u_d \mid d \in D\}$.
\item[ ] $R =  \{ (v_1,v_2), (v_2,w_1)\} \cup \{ (w_i, w_{i+1}) \mid 1 \le i < n\} \cup \{ (w_n,u_d) \mid u_d \in W\}$.
\item[ ] $\delta(u) = D$ for all $u \in W$.
\item[ ] For all $i \in \{ 1,2\}$ and $1\le j \le n$ and $v_i,w_j \in W$ define $\rho(v_i,p) = \rho(v_i,q) =  \rho(w_j, p) = \rho(w_j,q) = \emptyset$ and for all $u_d \in W,\ \rho(u_d,p) = \{ d\}$ and $\rho	(u_d, q) = \{ c \mid (d,c) \in I \}$.
\end{itemize}

By construction, $M$ is a model that is a path of length $n+2$ originating from
$v_1$ until $w_n$ at which point we have a tree of depth $1$, with children $u_d$,
one for each $d \in D$. Therefore, it is easy to see that $M, v_1 \models \psi_3$.

Note that $M$ is a constant domain model. Further, it can be easily checked that
$(a,b) \in R$ iff $M, u_a \models (p(a) \land q(b))$. Thus $(D,I) \models R(x,y)$ iff
$M, w_n \models \exists z\Diamond (p(x) \land q(y))$. Hence a routine induction shows
that for any quantifier free formula $\beta'$, $(D,I) \models \beta'$ iff $M, w_n \models 
\phi_{\beta'}$. Further, since $M$ is a path model until $w_n$ and there is a path of length $n+3$ starting from $v_1$, we see that
$M, v_1 \models \psi_2 \land \psi_3$. We then show that $M, v_1 \models \psi_1$, which would
complete the forward direction of the proof. This is proved by reverse induction on $i$. The 
base case, when $i = n$, follows from our assertion above on the interpretation of $R$ at $w_n$.
For the induction step, we crucially use the fact the model constructed is a path and hence
$\Box$ and $\Diamond$ coincide along the path.

\paragraph{}
To prove $(\Leftarrow)$, suppose that $\psi_\alpha$ is satisfiable, and let  
$M= (W,D,R,\gamma,V)$ be a constant domain model such that $M,v \models \psi_\alpha$.
Without loss of generality, we can assume $(W,R)$ to be a tree rooted at $v$, and
$\psi_3$ ensures that every path in it has length at least $n+3$.  
Let $u'$ be any world at height $n+3$. Define $I_{u'} = \{ (c,d) \mid c \in \rho(u',p),
d \in \rho(u',q)\}$. For world $u$ at height $n+2$, define $I_u = \bigcup \{I_{u'} \mid
(u,u') \in R\}$. Since $M,v \models \psi_2$, we see that $I_u = I_w$, for all $u, w$
at height $n+2$. Hence we unambiguously define $I = I_u$, thus defining the first
order model $M' = (D,I)$. We now claim that the formula $\alpha$ is satisfied in
this model. The definition of $I$ ensures that the atomic formulas are correctly
satisfied. We proceed by subtree induction noting that all children of a node
satisfy subformulas equivalently (which is needed for $\forall\Box$ formulas).

\end{proof}

\section{Decidability results}
Having seen that the $\Bundle$ fragment is undecidable over constant domain models,
and noted that the $\exists \Box$ bundle is decidable over increasing domain models
(\cite{Wang17d}), it is natural to wonder whether the problem is decidable with the $\forall\Box$
bundle or constant domain semantics, or both. In this section, we show that it is
indeed the combination that is the culprit, by showing that relaxing either of the
conditions leads to decidability. First, we show that the full fragment is decidable
over increasing domain models, and then show that the $\exists \Box$ bundle is 
decidable over constant domain models.

\subsection{Increasing domain models}

We consider formulas given in negation
normal form (NNF): 

$$ \phi::=  P\overline{x} \mid \neg P\overline{x} \mid (\phi\land \phi) \mid (\phi\lor\phi)\mid \existsBox{x} \phi \mid \existsDiamond{x} \phi \mid \forallBox{x} \phi \mid \forallDiamond{x} \phi $$

Formulas of the form $P\overline{x}$ and $\neg P \overline{x}$ are 
\emph{literals}. Clearly, every $\Bundle$-formula $\phi$ can be rewritten into 
an equivalent formula in NNF. 

We call a formula {\em clean} if no variable occurs both bound and free in it and
every use of a quantifier quantifies a distinct variable. Note that every 
$\Bundle$-formula can be rewritten into an equivalent clean formula.  (For instance,
$\exists x \Box P(x) \lor \exists x \Box Q(x)$ and $P(x) \land \exists x \Box Q(x)$ 
are unclean formulas, whereas $\exists x \Box P(x) \lor \exists y \Box Q(y)$ and 
$P(x) \land \exists y \Box Q(y)$ are their clean equivalents.)

We define the following tableau rules for all $\Bundle$ formulas in NNF. The tableau
is a tree structure $T = (W,V,E,\lambda)$ where $W$ is a finite set, $(V,E)$ is a 
rooted tree and $\lambda: V \to L$ is a labelling map. Each element in $L$ is of 
the form $(w,\Gamma,F)$, where $w \in W$, $\Gamma$ is a finite 
set of formulas and $F \subseteq \Var$ is a finite set.  The intended meaning of 
the label is that the node constitutes a world $w$ that satisfies the formulas in 
$\Gamma$ with the `assignment' $F$, with each variable in $F$ denoting one that 
occurs free in $\Gamma$ and as we will see, the interpretation will be the identity.

 A rule specifies that if a node labelled
by the premise of the rule exists, it can cause one or more new nodes to be created as 
children with the labels as given by the completion of the rule. 

\begin{definition}{Tableau rules}

{\scriptsize 
\begin{center}
\noindent \begin{tabular}{|c|}
\hline
$\dfrac{w: \phi_1\lor\phi_2,\Gamma, F}{w:\phi_1,\Gamma, \sigma \mid w:\phi_2,\Gamma, F}$ \tt{($\lor$)}\qquad $\dfrac{w: \phi_1\land\phi_2,\Gamma, F}{w:\phi_1,\phi_{2},\Gamma,F}${\tt ($\land$)}\\
\hline
Given $n_1,m_1\geq 1, n_2,m_2,s \geq 0$:\\
$\dfrac{\splitdfrac{w:\existsDiamond{x_1}\alpha_1 \cdots,\existsDiamond{x_{n_1}}\alpha_{n_1}, \existsBox{y_1}\beta_1,\cdots, \existsBox{y_{n_2}}\beta_{n_2},
}{\forallDiamond{z_1}\phi_1,\cdots, \forallDiamond{z_{m_1}}\phi_{m_1},\forallBox{z'_1}\psi_1,\cdots,\forallBox{z'_{m_2}}\psi_{m_2},}{r_1\dots r_s, F}}
{\splitdfrac { \langle wv_{x_i}: \alpha_i, \{ \beta_j \mid 1 \le j \le n_2\} , \{\psi_l[z/z'_l] \mid z \in F', l\in [1,m_2]\} , F' \rangle \cup}{\langle wv^y_{z_k}: \phi_k[y/z_k], \{ \beta_j \mid 1 \le j \le n_2\}, \{\psi_l[z/z'_l] \mid z \in F', l\in [1,m_2]\} , F' \rangle}}$ \tt{($\BR$)}\\
${where\ i \in [1,n_1], k\in [1,m_1], y\in F'}$  \\
\hline
Given $n_2, m_2\geq 1;\ s\geq 0$:\\
$\dfrac{w: {\existsBox{y_1} \beta_1, \cdots, \existsBox{ y_{n_2}} \beta_{n_2},\forallBox{z'_1}\psi_1,\cdots,\forallBox{z'_{m_2}}\psi_{m_2}, r_1\dots r_s, F}}{w: r_1\dots r_s, F}  $ (\END) \\
\hline
\end{tabular}\\

\noindent where $F'= F \cup\{x_i\mid i\in [1,n_1] \ \cup \{y_j \mid j \in [1,n_2]\}$ and $r_1\cdots r_s\in lit$ (the literals).
\end{center}
}
\end{definition}

The rules are standard. The rule $(\END)$ says that in the absence of any $Q x \Diamond$
formulas, with $Q \in \{\exists, \forall\}$, the branch does not need to be explored further,
only the literals remain. Further, note that there is an implicit ordering on how rules
are applied: $(\BR)$ insists on the label containing no top level conjuncts or disjuncts,
and hence may be applied only after the $\land$ and $\lor$ rules have been applied as 
many times as necessary.

The rule $(\BR)$ looks complicated but asserts standard modal
validities, but with multiplicity. To see how it works, consider a model $M$, a world
$u$ and assignment $\sigma$ such that $(M, u, \sigma) \models \exists x \Diamond \alpha 
\land \exists y \Box \beta \land \forall z \Box \psi$. Then for some domain element
$c \in \delta(u)$, we have a successor world $v$ such that $(M,v,\sigma') \models
\alpha \land \beta \land \psi$, where $\sigma'(x) = c$ and $\sigma'(z) = c$. Further
if $(M, u, \sigma) \models \forall z \Diamond \phi \land \forall z' \Box \psi$ then
for every domain element $d \in \delta(u)$, we have a successor world $v_d$ such that
$(M,v_d,\sigma') \models \phi \land \psi$, where $\sigma'(z) = d$ and $\sigma'(z') = d$.
When the domain elements we use are themselves variables, they can be substituted 
into formulas so we could well write $(M,v_d,\sigma') \models \phi[z/d] \land \psi[z'/d]$.
The rule $(\BR)$ achieves just this, but has to do all this simultaneously for all the
quantified formulas at the node ``in one shot'', and has to keep the formulas clean too.

We need to check that the rule $(\BR)$ is well-defined. Specifically, if the label in
the premise contains only clean formulas, we need to check that the label in the
conclusion does the same. To see this, observe the following, with $\Gamma$ being
the set of clean formulas in the premise. Let $\Delta, \Delta'$ stand for any modality.

\begin{itemize}
\item Note that if $\exists x \Delta \phi$ and $Q y \Delta' \psi$ are both
in $\Gamma$, with $Q$ any quantifier, then $x \neq y$ and neither $x$ occurs free in $\psi$ nor $y$ 
occurs free in $\phi$, also $\phi$ or $\psi$ do not contain any subformula
that quantifies over $x$ or $y$.

\item Hence, in the conclusion of $(\BR)$, every substitution of the form 
$\phi[y/z]$ or $\psi[z/z']$ results in a clean formula, since $y$ occurs free
in $\phi$ and $z$ does not occur at all in $\phi$ and similarly for $\psi$.
\end{itemize}

Thus, maintaining `cleanliness' allows us to treat existential quantifiers as
giving their own witnesses. The `increase' in the domain is given by the added
elements in $F'$ in the conclusion.  
Note that with each node creation either the number of boolean connectives or the
maximum quantifier rank of formulas in the label goes down, and hence repeated applications
of the tableau rules must terminate, thus guaranteeing that the tableau generated is
always finite. 

A tableau is said to be \textit{open} if it does not contain any node 
$u$ such that its label contains a literal $r$ as well as its negation.  
Given a tableau $T$, we say a node $(w: \Gamma, F)$ is a \textit{branching node} 
if it is branching due to the application of $\BR$.  We call $(w, \Gamma, F)$ the 
\textit{last node of} $w$, if it is a leaf node or a branching node. Clearly, given any 
label $w$ appearing in any node of a tableau $T$, the last node of $w$ uniquely exists. 
If it is a non-leaf node, every child of $w$ is labelled $wu$ for some $u$.

Let $t_w$ denote the last node of $w$ in tableau $T$ and let
$\lambda(t_w) = (w,\Gamma,F)$. If it is a non-leaf node, then it is a branching
node with rule $(\BR)$ applying to it with $F'$ as its conclusion.
We let $Dom(t_w)$ denote the set $F'$ in this case and $Dom(t_w) = F$ otherwise.

\begin{theorem}
\label{thm: Bundle increasing decidability}
For any clean $\Bundle$-formula $\theta$ in NNF, 
there is an open tableau from $(r, \{\theta\}, F_r)$ where 
$F_r =\{x \mid x \text{ is free in } \theta \}\cup 
\{z\}$, where  $z\in \Var,\ z$ does not appear in $\theta$, iff
$\theta$ is satisfiable in an increasing domain model.  
\end{theorem}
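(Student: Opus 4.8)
The plan is to prove the two directions of the equivalence separately, each by an explicit construction that converts a model into a tableau and vice versa.

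\medskip

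\noindent\textbf{Soundness ($\Leftarrow$, or rather: open tableau $\Rightarrow$ satisfiable).} Suppose there is an open tableau $T$ rooted at $(r,\{\theta\},F_r)$. I would build an increasing domain model $M$ directly from $T$. The worlds of $M$ are the labels $w$ appearing in $T$ (equivalently, the ``last nodes'' $t_w$). The accessibility relation puts $wRw'$ whenever $w'$ is a child-label created by an application of $(\BR)$ at $t_w$ (i.e.\ $w' = wv$ for some $v$). The domain is the (finite) set of all variables ever appearing in any $F$-component, and the local domain $\delta(w)$ is $Dom(t_w)$; the increasing-domain condition $wRw'\Rightarrow\delta(w)\subseteq\delta(w')$ holds because the rule $(\BR)$ only enlarges $F$ to $F'$. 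For the valuation, put a tuple $\overline{x}$ into $\rho(P,w)$ iff the literal $P\overline{x}$ occurs in the label of (some node with label) $w$; this is consistent precisely because the tableau is open, so $P\overline{x}$ and $\neg P\overline{x}$ never co-occur. The assignment is the identity on variables. The heart of this direction is a \emph{truth lemma}: for every node $(w,\Gamma,F)$ in $T$ and every $\phi\in\Gamma$, we have $M,w \models \phi[\overline{\mathrm{id}}]$. This is proved by induction on the structure of $\phi$ (more precisely, on the pair (number of boolean connectives, quantifier rank), matching the termination measure already noted). The boolean cases use that the $(\land)$ and $(\lor)$ rules were applied exhaustibly before $(\BR)$; the $\exists x\Diamond$, $\exists y\Box$, $\forall z\Diamond$, $\forall z'\Box$ cases all fall out of the informal reading of $(\BR)$ given in the text — e.g.\ for $\exists x\Diamond\alpha_i$, the child $wv_{x_i}$ witnesses it with the domain element $x_i\in\delta(w)$; for $\forall z'\Box\psi_l$, \emph{every} child $wv$ carries $\psi_l[z/z']$ for every $z\in F'$, which is exactly what $\forall z'\Box$ demands; and the $(\END)$ rule handles leaves where only literals and boxed formulas remain (boxed formulas over no successors are vacuously true).

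\medskip

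\noindent\textbf{Completeness ($\Rightarrow$: satisfiable $\Rightarrow$ open tableau).} Suppose $M,w_0\models\theta$ under an assignment $\sigma_0$ relevant at $w_0$. I would run the tableau construction from the root and, guided by $M$, make the right choices at each $(\lor)$ branching and supply the right witnesses at each $(\BR)$ application, maintaining a \emph{simulation invariant}: with each tableau node $(w,\Gamma,F)$ I associate a world $m(w)$ of $M$ and an assignment $\sigma_w$ relevant at $m(w)$, such that (i) $\sigma_w$ is injective on $F$ — or at least that identifying variables according to $\sigma_w$ is harmless — and (ii) $M, m(w), \sigma_w \models \phi$ for every $\phi\in\Gamma$. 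Initially $m(r)=w_0$ and $\sigma_r$ extends $\sigma_0$, sending the fresh variable $z$ to any element of $\delta(w_0)$. For $(\lor)$ pick the disjunct true in $M$; for $(\land)$ nothing changes; for $(\BR)$, each $\exists x_i\Diamond\alpha_i$ gives (by the semantics) a domain element $c_i\in\delta(m(w))$ and a successor $v_i$ with $M,v_i\models\alpha_i$ under $\sigma_w[x_i\mapsto c_i]$ — set $m(wv_{x_i})=v_i$ and extend the assignment to send the fresh tableau variable to $c_i$; each $\forall z_k\Diamond\phi_k$ together with each $y\in F'$ gives, applying $\sigma_w(y)\in\delta(m(w))$, a successor $v^y_{z_k}$ with $M,v^y_{z_k}\models\phi_k$ under $\sigma_w[z_k\mapsto\sigma_w(y)]$; and the universally boxed $\psi_l$ and existentially boxed $\beta_j$ are inherited at all these successors by the semantics of $\Box$. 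One checks the invariant is preserved, and in particular that the substituted forms $\phi_k[y/z_k]$, $\psi_l[z/z'_l]$ are satisfied because substituting $y$ for $z_k$ is sound when the assignment already maps $z_k$ to $\sigma_w(y)$. Finally, since the invariant holds at every node, no node can contain both a literal and its negation (that would make $M$ contradict itself at $m(w)$), so the resulting tableau is open.

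\medskip

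\noindent\textbf{Main obstacle.} The delicate point is the interaction between \emph{cleanliness}, the identity-assignment convention, and substitution inside $(\BR)$. In the completeness direction the tableau uses variables as their own witnesses (the identity interpretation), so when $M$ hands us a domain element $c$ that happens to already be ``named'' by another variable, we must be careful that collapsing names does not break the invariant or the cleanliness bookkeeping; this is why the statement throws in a spare fresh variable $z$ in $F_r$, and why the text stresses that substitutions like $\phi_k[y/z_k]$ keep formulas clean. Handling this cleanly — showing that we may always maintain an injective-enough correspondence between the tableau's free variables and the elements of $M$ they are meant to denote, even as $(\BR)$ forces fresh-variable introductions at every modal step — is the part that needs the most care; everything else is a routine structural induction matching the termination measure already identified in the excerpt.
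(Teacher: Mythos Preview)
Your proposal is correct and matches the paper's approach: soundness builds the model from the open tableau (worlds are labels, variables serve as domain elements under the identity assignment) and proves the truth lemma by induction along the tree; completeness shows each rule---chiefly $(\BR)$---preserves satisfiability of the label set, and your explicit simulation map $m$ with assignments $\sigma_w$ is just a more concrete phrasing of the paper's ``rule applications preserve satisfiability'' argument. Your injectivity worry is unnecessary---the paper never requires the assignment to be injective, since cleanliness alone ensures that the fresh bound variables $x_i,y_j,z_k,z'_l$ do not occur free in the other formulas, so the merged assignments in (A$'$) and (C$'$) are well-defined without any collision concern.
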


\begin{proof}

Let $T$ be any tableau
$T$ starting from $(r, \{\theta\}, F_r)$ where $\theta$ is clean. We observe
that for any node $(v, \Gamma, F)$ in $T$, we have the following.
If $x \in F$ and occurs in a formula in $\Gamma$ then every occurrence is free. Further,
every variable $x$ occurring free in a formula in $\Gamma$ is in $F$. 
These are proved by induction on the structure of $T$ using the fact that the
rule $(\BR)$, when applied to clean formulas, results in clean formulas.

To prove the theorem, given an open tableau $T = (W,V,E,\lambda)$ with root node labelled by
$(r,\{ \theta\}, F_r)$, we define $M=(W, D, \delta, R, \rho)$ where: $D=\Var$; $w R v$ iff 
$v=wv'$ for some $v'$; $\delta(w)=Dom(t_w)$; $\overline{x}\in \rho(w, P)$ iff $P\overline{x} 
\in \Gamma$, where $\lambda(t_w) = (w, \Gamma, F)$.
Clearly, if $w R v$ then $Dom(t_w) \subseteq Dom(t_v)$, and hence
$M$ is indeed an increasing domain model.  

Moreover $\rho$ is well-defined due to openness of $T$. 
We now show that $M,r$ is indeed a model of $\theta$, and this is proved by the
following claim.

\paragraph{Claim.} For any tree node $w$ in $T$ if $\lambda(t_w) =
(w, \Gamma, F)$ and if $\alpha \in \Gamma$ then $(M,w,id_F) \models \alpha$.
(Below, we abuse notation and write $(M,w,F) \models \alpha$
for $(M,w,id_F) \models \alpha$ where $id_F = \{ (x,x) \mid x \in F\}$.)

The proof proceeds by subtree induction on the structure of $T$. The base case
is when the node considered is a leaf node and hence it is also the last node
with that label. The definition of $\rho$ ensures that the literals are
evaluated correctly in the model.

For the induction step, the cases for the conjunction and disjunction rules
are standard. Now consider the application of rule $(\BR)$ at a branching
node $t_w$ with label $(w, \Gamma, F)$. Let
$$\begin{array}{ll}
\Gamma=&
\{\existsDiamond{x_i}\alpha_i \mid i\in [1,n_1]\}  \cup \{\existsBox{y_j}\beta_j \mid j\in [1,n_2]\} \cup \{\forallDiamond{z_k}\phi_k \mid k \in [1,m_1]\}\\
&\cup
\{\forallBox{z'_l}\psi_l \mid l \in [1,m_2] \}\cup \{r_1\dots r_s\}.
\end{array}$$
By induction hypothesis, we have that for every $i \le n_1$,
$M,wv_{x_i}, F' \vDash \alpha_i \land \bigwedge_{j \le n_2} \beta_j \land \psi'$
and for every $y\in F'$ and $k \in [1, m_1]$, 
$M,wv^y_{z_k}, F' \vDash \phi_k[y \slash z_k] \land \psi'$, where
$\psi' = \bigwedge_{ l\le m_2}^{z \in F'}\psi_{l}[z\slash z_l']$.

Note that $D_w=Dom(t_w)=F'$. We need to show that $M, w, F \vDash \alpha$ for each 
$\alpha \in \Gamma$. Every such $\alpha$ is either a literal or a bundle formula.
The assertion for literals follows from the definition of $\rho$.
For $\exists x_i \Diamond \alpha_i \in \Gamma$ we have the successor $wv_{x_i}$ 
where $\alpha_i$ is true. Similarly for every $\forall z_k \Diamond \phi_k \in \Gamma$ 
and $y \in D_w$ we have the successor $wv^{y}_{z_k}$ where $\phi_l[y \slash z_k]$ is true. 

Now for the case $\exists y_j \Box  \beta_j$: by induction hypothesis, for all 
successors $wv^{\#}_z$ of $w$ where $\#$ is either empty or $\# \in F'$ 
we have $M, wv^{\#}_{z},F'\vDash \beta_j$. By cleanliness of $\beta_j$, for all
$j' \ne j$ and for all $i,k,l$ we have that $x_i, y_{j'},z_k,z'_l$ are not free 
in $\beta_j$. Hence $M, wv^{\#}_{z}, id_F[y_j\mapsto y_j]\vDash \beta_j$ for 
each $wv^{\#}_z$.  Since $y_j \in F' = D_w$ we have $M,w, id_F \vDash \exists y_j \Box \beta_j$. 

The case $\forall z_l' \Box \psi_l$ is similar. By induction hypothesis, 
we have $M, wv^{\#}_{z}, F'\vDash \psi_l[a \slash z_l']$ for every $a\in F'$ 
and again by cleanliness of $\psi_l$, for all $l' \ne l$ and for all $i,j,k$ 
we note that $x_i,y_j,z_k,z'_{l'}$ are not free in $\psi_l$. Thus 
$M, wv^{\#}_{z}, F'[z_l \mapsto a] \vDash \psi_l$ for all $a \in F' = D_w$.
Hence $M,w, id_F \vDash \forallBox{z'_l} \psi_l$.

Thus the claim is proved and hence it follows that $M,r,F_r\vDash \theta$. 

\medskip

\noindent \textbf{Completeness of tableau construction}: \\
We only need to show that all rule applications preserve the satisfiability of 
the formula sets in the labels. This would ensure that there is an open tableau 
since satisfiability of formula sets ensures lack of contradiction among literals.
It is easy to see that the rules $(\land)$ and $(\END)$ preserve satisfiability.
If one of the conclusions of the $(\lor)$ rule is satisfiable then so is the premise.
It remains only to show that $(\BR)$ preserves satisfiability. Consider a label set
$\Gamma$ of clean formulas at a branching node. Let
$\begin{array}{ll}
\Gamma=&
\{\existsDiamond{x_i}\alpha_i \mid i\in [1,n_1]\}  \cup \{\existsBox{y_j}\beta_j \mid j\in [1,n_2]\} \cup \{\forallDiamond{z_k}\phi_k \mid k \in [1,m_1]\}\\
&\cup
\{\forallBox{z'_l}\psi_l \mid l \in [1,m_2] \}\cup \{r_1\dots r_s\}.
\end{array}$
be satisfiable at a model $M=\{W, D, \delta, R, \rho\}$, $w \in W$ and an 
assignment $\eta$ such that $\eta(x)\in D_w$ 
for all $x\in FV(\Gamma)$ and $M, w, \eta \vDash \bigwedge_{\chi \in \Gamma} 
\chi $.

By the semantics, we have the following: 
(A):  There exist $a_1, \dots, a_{n_1}\in D_w$ and $v_1 \dots v_{n_1} \in W$ 
successors of $w$ such that $M,v_i, \eta[x_i \mapsto a_i] \vDash \alpha_i$.
(B): There exist $b_1, \dots b_{n_2} \in D_w$ such that for all $v \in W$ 
if $w \to v$ then $M,v, \eta[y_j \mapsto b_j] \models \beta_j$.
(C): For all $c\in D_w$ there exist $v_1^c \dots v^c_{m_1} \in W$, successors 
of $w$ such that $M,v^c_k,\eta[z_k \mapsto c] \models \phi_k$.
(D): For all $d\in D_w$ and for all $v \in W$ if $w \to v$ then 
$M,v,\eta[z'_l \mapsto d] \models \psi_l$.

By cleanliness of $\beta_j$, each $y_j$ is free only in $\beta_j$ and $y_j$ 
is not free in any $\alpha_i, \beta_{j'}, \phi_k, \psi_l$ for $j'\ne j$ and 
for all $i,k,l$. Similarly $z'_l$ is free only in $\psi_l$ and $z'_l$ is not 
free in any $\alpha_i, \beta_j, \phi_k, \psi_{l'}$ for $l'\ne l$ and for all 
$i,j,k$. Thus, due to (B) and (D), we can rewrite (A) and (C) as:
(A'):  There exists  some $\overline{b} \in D_w$ and for all 
$\overline{d} \in D_w$ such that there exist $a_i \in D_w$ and  $v_i \in W$,
successor of $w$ such that\\ $M,v_i, \eta[ x_i \mapsto a_i;\ 
\overline{y} \mapsto \overline{b};\ \overline{z'} \mapsto \overline{d}] \vDash 
\alpha_i \land \bigwedge_j \beta_j \land \bigwedge_l \psi_l$.

(C'): there exists $\overline{b} \in D_w$ and for all $\overline{d} \in D_w$  
such that for all $c \in D_w$ there exist $v_1^c \dots v^c_{m_1} \in W$, 
successors of $w$ such that $M,v^c_k,\eta[ z_k\mapsto c;\ 
\overline{y} \mapsto \overline{b};\ \overline{z'} \mapsto \overline{d}] \models 
\phi_k \land \bigwedge_j \beta_j \land \bigwedge_l \psi_l$.

Now all the nodes in the conclusion of the $\BR$ rule have formulas as described 
in type A' or C' and are hence satisfiable.
\end{proof}

This proves the theorem, offering us a tableau construction procedure for
every formula: we have an open tableau iff the formula is satisfiable. Now note
that the tableau is not only of depth linear in the size of the formula, but 
also that subformulas are never repeated across siblings. Hence an algorithm 
can explore the tableau depthwise and reuse the same space when exploring other 
branches. The techniques are standard as in tableau procedures for modal logics. 
The extra space overhead for keeping track of domain elements is again only 
linear in the size of the formula. This way, we can get a \PSPACE-algorithm
for checking satisfiability. On the other hand, the $\PSPACE$ lower bound for
propositional modal logic applies as well, thus giving us the following corollary.

\begin{corollary}
Satisfiability of $\Bundle$-formulas is \PSPACE-complete. 
\end{corollary}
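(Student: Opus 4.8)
The plan is to establish the upper and lower bounds separately: \PSPACE\ membership from the tableau of Theorem~\ref{thm: Bundle increasing decidability}, and \PSPACE-hardness by a trivial embedding of propositional modal logic.

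\emph{Membership in \PSPACE.} The plan is to turn the tableau of Theorem~\ref{thm: Bundle increasing decidability} into a polynomial-space decision procedure via the standard depth-first search of modal tableaux. The facts to check are structural. First, in any tableau for a clean NNF formula $\theta$, every label $(w,\Gamma,F)$ has $\Gamma$ built only out of substitution instances of subformulas of $\theta$ and literals over them, so $|\Gamma|$ is polynomial in $|\theta|$; and $F$ grows by at most $n_1+n_2$ variables at each application of $(\BR)$, so it too remains polynomial along any branch. Second, each node creation strictly decreases either the number of Boolean connectives in its label or the maximum quantifier rank, and $(\BR)$ in particular lowers the quantifier rank; hence every branch has length $O(|\theta|)$, and (as noted after the theorem) subformulas are never repeated across siblings, so storing a single branch plus the outstanding sibling obligations takes polynomial space. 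Third, the rules are local: whether the subtree below a node is open depends only on that node's label and on the finitely many labels the rules generate from it. Combining these, a recursive routine that nondeterministically resolves each $(\lor)$ choice and then, for every successor demanded by $(\BR)$ --- the $\exists\Diamond/\exists\Box$ witness $v_{x_i}$ for each $i$ and the $\forall\Diamond$ witness $v^{y}_{z_k}$ for each $k$ and each $y\in F'$ --- recursively checks that the corresponding child subtree is open, reusing the same workspace across all siblings, runs in space $O(\text{branch length}\cdot\text{label size})=\mathrm{poly}(|\theta|)$. By Savitch's theorem this is a deterministic \PSPACE\ procedure, and by Theorem~\ref{thm: Bundle increasing decidability} it decides satisfiability over increasing domain models.

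\emph{\PSPACE-hardness.} Propositional modal logic $\mathsf K$ is essentially a syntactic fragment of $\Bundle$: nullary predicates (equivalently, $P(x_0)$ for a single fixed variable $x_0$) play the role of propositional letters, and for a variable $x$ not occurring in $\phi$ the formulas $\existsBox{x}\phi$ and $\existsDiamond{x}\phi$ are, by the semantics and non-emptiness of local domains, equivalent to $\Box\phi$ and $\Diamond\phi$. A propositional modal formula is satisfiable over a Kripke frame iff it is satisfiable over some increasing domain model (take any non-empty domain and let only the valuations of the nullary predicates depend on the world). Since satisfiability of $\mathsf K$ is \PSPACE-hard, so is satisfiability of $\Bundle$; with the membership above this gives \PSPACE-completeness.

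\emph{Main difficulty.} The delicate point is the membership direction: one must verify carefully that the variable sets $F$ and the accumulated substitution instances stay polynomially bounded, and organise the search so that the ``one witness per element of $F'$'' branching of $(\BR)$ is processed one sibling at a time, so that workspace is genuinely reused rather than held simultaneously. Since $F'$ has only polynomially many elements and each child check reclaims its space on return, this goes through exactly as in the classical Ladner-style argument for propositional modal logic.
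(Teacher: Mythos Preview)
Your proposal is correct and follows essentially the same approach as the paper: the upper bound is the standard Ladner-style depth-first exploration of the tableau from Theorem~\ref{thm: Bundle increasing decidability} with space reuse across siblings (the paper notes linear depth and linear domain overhead and defers to ``standard'' modal tableau techniques), and the lower bound is the embedding of propositional $\mathsf K$ via $\existsBox{x}\phi\equiv\Box\phi$ for fresh $x$. Your write-up merely fleshes out the bookkeeping the paper leaves implicit.
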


\subsection{Constant domain models}
We now take up the second task, to show that over constant domain models,
the culprit is the $\forall \Box$ bundle, by proving that the  satisfiability 
problem for the $\BundleEB$ is decidable over constant domain models.  
\cite{Wang17d} already showed decidability of the $\BundleEB$ over increasing
domain models. Taken together, we see that the $\BundleED$ fragment is
computationally robust.

The central idea behind the tableau procedure in the previous section was the 
use of existential quantifiers to offer their own witnesses, and cleanliness of
formulas ensures that these are new every time they are encountered. This works
well with increasing domain models, but in constant domain models, we need to fix
the domain right at the start of the tableau construction and use only these 
elements as witnesses. Yet, a moment's reflection assures us that we can give a
precise bound on how many new elements need to be added for each subformula of
the form $\existsBox{x}\phi$, and hence we can include as many elements as needed
at the beginning of the tableau construction.

Let $\Sub(\theta)$ stand for the finite set of subformulas of $\theta$.
Given a clean formula $\theta$ in NNF, for every 
$\exists x_j \Box \phi \in \Sub(\theta)$ let $\Var^\exists(\theta) = 
\{ x \mid \exists x \Box \phi \in \Sub(\theta)\}$. Now, cleanliness has
its advantages: every subformula of a clean formula is clean as well.
Hence, when $\theta_1$ and $\theta_2$ are both in $\Sub(\theta)$,
$\Var^\exists(\theta_1) \cap \Var^\exists(\theta_2) = \emptyset$. 
Similarly, when $\theta_1 \in \Sub(\theta)$ and $\theta_2 \in \Sub(\theta_1)$,
again $\Var^\exists(\theta_1) \cap \Var^\exists(\theta_2) = \emptyset$. 

Fix a clean formula $\theta$ in NNF with modal depth $h$.
For every $x \in \Var^\exists(\theta)$ define $\Var_x$ to be the set of $h$ 
\textit{fresh} variables $\{ x^k \mid 1\le k \le h\}$, and let $
\Var^+(\theta)=\bigcup\{\Var_x\mid x\in \Var^\exists(\theta)\}$, be the set 
of new variables to be added. Note that $\Var_x \cap \Var_y = \emptyset$ when
$x \neq y$. 
Fix a variable $z$ not occurring in $\theta$. Define $D_\theta= \FV(\theta) \cup 
\Var^+(\theta) \cup \{z \mid z$ does not occur in $\theta\}$. Note that $D_\theta$ is non-empty.

The tableau rules are given by:

 \begin{center}
 \noindent \begin{tabular}{|c|}
 \hline
 $\dfrac{w: \phi_1\lor\phi_2,\Gamma, \fixed}{w:\phi_1,\Gamma,\fixed \mid w:\phi_2,\Gamma,\fixed}$ \tt{($\lor$)}\qquad $\dfrac{w: \phi_1\land\phi_2,\Gamma,\fixed}{w:\phi_1,\phi_{2},\Gamma,\fixed}${\tt ($\land$)}\\
 \hline
 Given $n,s \geq 0;\ m\geq 1$:\\
 $\dfrac{w:\existsBox{x_1}\phi_1, \dots, \existsBox{x_n}\phi_n,\forallDiamond{y_1}\psi_1,\dots, \forallDiamond{y_m}\psi_m, r_1\dots r_s,\fixed}{ \langle (wv^y_{y_i}: \{\phi_j[x^{k_j}_j\slash x_j]\mid 1\leq j\leq n\}, \psi_i[y\slash y_i], \fixed') \rangle where\ y\in D_\theta, i\in [1,m]} $ \tt{($\BR$)}\\
\hline 

 Given $n\geq 1, s\geq 0$:\\
 $\dfrac {w:\exists x_1 \Box \phi_1, \cdots, \exists x_n \Box \phi_n, r_1, \cdots r_s, \fixed}{w: r_1\cdots r_s, C}  $ (\END) \\
 \hline
 \end{tabular}\\

 \noindent where $\fixed \subseteq D_\theta$ and $\fixed' = \fixed \cup \{ x^{k_j}_j\mid 1\leq j\leq n\} \cup \{ y\}$ where ${k_j}$ is the smallest number such that $x^{k_j}_j\in\Var_{x_j} \setminus \fixed$
 and $r_1 \dots r_s\in lit$. \end{center}

Note that the rule starts off one branch for each $y \in D_\theta$, since the $\forall 
\Diamond$ connective requires this over the fixed constant domain $D_\theta$. $\fixed$ 
keeps track of the variables used already along the path from the root till the current 
node.  These are now fixed, so the witness for $\exists x \Box \phi$ is picked from the 
remaining variables in $\Var_x(\theta)$.  Note that the variables in $\Var_{x_j}$ are
introduced only by applying $\BR$. Since $|\Var_{x_j}|$ is the modal depth, we always 
have a fresh $x_j^k$ to choose.

The notion of open tableau is as before, and the following observation is very useful:

\begin{proposition}\label{clean-BR}
The rule $(\BR)$ preserves cleanliness of formulas: if a tableau node is labelled
by $(w,\Gamma,C)$, $\Gamma$ is clean, and a child node labelled $(wv, \Gamma',C')$ is
created by $(\BR)$ then $\Gamma'$ is clean as well.
\end{proposition}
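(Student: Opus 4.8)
The plan is to track exactly which variables occur, and in what role (free versus bound), in each formula appearing in the conclusion of $(\BR)$, and to verify that the two defining conditions of cleanliness — no variable occurs both bound and free, and every quantifier binds a distinct variable — are preserved. First I would fix notation: let the premise label be $(w,\Gamma,\fixed)$ with
$$\Gamma = \{\existsBox{x_1}\phi_1,\dots,\existsBox{x_n}\phi_n, \forallDiamond{y_1}\psi_1,\dots,\forallDiamond{y_m}\psi_m, r_1,\dots,r_s\},$$
all clean, and consider a typical child $(wv^y_{y_i}, \Gamma', \fixed')$ where $\Gamma' = \{\phi_j[x^{k_j}_j/x_j]\mid 1\le j\le n\}\cup\{\psi_i[y/y_i]\}$. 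Since every subformula of a clean formula is clean, each $\phi_j$ and each $\psi_i$ is already clean; so the only way cleanliness can be broken is through the substitutions themselves, or through a clash \emph{between} two distinct formulas of $\Gamma'$.

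The key observations to record are: (i) $x^{k_j}_j$ is chosen from $\Var_{x_j}$, a set of $h$ \emph{fresh} variables associated with $x_j$, so $x^{k_j}_j$ does not occur anywhere in $\theta$ and in particular does not occur (free or bound) in any $\phi_{j'}$ or $\psi_i$; (ii) the sets $\Var_{x_j}$ are pairwise disjoint for distinct $j$ (stated in the excerpt: $\Var_x\cap\Var_y=\emptyset$ for $x\ne y$), so the newly substituted-in variables $x^{k_1}_1,\dots,x^{k_n}_n$ are pairwise distinct and disjoint from the material of every $\psi_i$; (iii) by cleanliness of $\Gamma$, the bound variable $x_j$ of $\existsBox{x_j}\phi_j$ does not occur free in any other formula of $\Gamma$ (it may occur free in $\phi_j$ itself, which is precisely what the substitution targets), and likewise $y_i$ of $\forallDiamond{y_i}\psi_i$ does not occur free elsewhere; also $y\in D_\theta$, and I would note that $D_\theta$'s variables are chosen disjoint from all bound variables occurring in $\theta$. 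Granting these, substituting $x^{k_j}_j$ for the free occurrences of $x_j$ in $\phi_j$ yields a clean formula in which $x^{k_j}_j$ occurs only free (it occurs in no quantifier of $\phi_j$, being fresh) and $x_j$ no longer occurs at all; similarly $\psi_i[y/y_i]$ is clean with $y$ free-only. Across formulas: the free variables of $\phi_j[x^{k_j}_j/x_j]$ are $(\FV(\phi_j)\setminus\{x_j\})\cup\{x^{k_j}_j\}$, and none of these is bound inside any other $\phi_{j'}[x^{k_{j'}}_{j'}/x_{j'}]$ or inside $\psi_i[y/y_i]$, because the bound variables of those formulas are already bound variables of $\theta$ (unchanged by the substitutions, which only introduce fresh free variables) and the free variables listed are either original free variables of $\theta$ — which by cleanliness of $\theta$ are never bound anywhere in $\theta$ — or one of the fresh $x^{k_j}_j$ or $y$, which occur in no quantifier at all. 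The same check handles $\psi_i[y/y_i]$ against the $\phi_j$-formulas. Finally, the quantifiers appearing in $\Gamma'$ are exactly the quantifiers of the $\phi_j$ and of $\psi_i$, a sub-multiset of the quantifiers of $\theta$ minus the bindings of $x_1,\dots,x_n,y_i$; since $\theta$ is clean these were already all distinct, so distinctness is inherited.

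I do not expect a serious obstacle here; the statement is essentially bookkeeping, and the substance was already flagged in the analogous discussion of the increasing-domain rule (the two bullet points following the increasing-domain tableau). The one point that needs a little care — and which I would make fully explicit — is that the $(\BR)$ rule is applied \emph{along a branch}, so $\fixed$ accumulates variables over repeated applications, and the freshness argument in (i) must be upgraded: when $(\BR)$ is applied at depth $d$ along the branch, the witness $x^{k_j}_j$ must be chosen \emph{not yet used on this branch}, which is guaranteed because $|\Var_{x_j}| = h$ (the modal depth), each application of $(\BR)$ consumes at most one element of $\Var_{x_j}$, and by cleanliness $\existsBox{x_j}\phi_j$ can be ``unpacked'' at most $h$ times along any branch (its own modal depth is at most $h$, and its nesting depth strictly decreases). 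Since here the concern is merely cleanliness of a single child relative to a clean parent, it suffices that $k_j$ is the smallest index with $x^{k_j}_j\in\Var_{x_j}\setminus\fixed$, as the rule prescribes, together with the fact — to be verified by an easy induction on the tableau, as in the proof of Theorem \ref{thm: Bundle increasing decidability} — that every variable occurring in $\fixed$ that also occurs in a formula of $\Gamma$ occurs there only freely, so the chosen $x^{k_j}_j\notin\fixed$ genuinely does not clash with the bound variables of $\Gamma$.
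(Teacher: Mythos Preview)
The paper does not actually supply a proof of Proposition~\ref{clean-BR}; it is stated as a ``useful observation'' and immediately followed by its corollary. Your proposal is therefore not competing with any argument in the paper, and the bookkeeping you lay out is correct and in the same spirit as the two bullet points the paper gives after the increasing-domain $(\BR)$ rule.

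One small remark: your final paragraph worries more than necessary about $\fixed$ and the branch history. For cleanliness alone, you do not need the invariant that $x^{k_j}_j\notin\fixed$ implies it is absent from $\Gamma$. It suffices that the bound variables of any formula in $\Gamma'$ are still bound variables of the original $\theta$ (substitutions never touch quantifier positions), while every variable in $D_\theta$ --- including each $x^{k_j}_j$ and the chosen $y$ --- is by construction disjoint from the bound variables of $\theta$. Hence the substituted-in variables can never be captured, regardless of what $\fixed$ records. The tracking via $\fixed$ matters for soundness and completeness of the tableau (guaranteeing genuinely new witnesses along a branch), not for the cleanliness claim itself. With that simplification your argument goes through cleanly.
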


An important corollary of this proposition is that for all $x \in D_\theta$, at any
tableau node all occurrences of $x$ in $\Gamma$ are free. Therefore, for any
formula of the form $\psi_i[y/y_i]$ in the conclusion of the rule, $y$ is free and $y_i$ does not
occur at all in $\psi_i$. 

\begin{theorem}
 \label{thm: BundleEB constant decidability}
For any clean $\BundleEB$-formula $\theta$ in NNF, there is an open constant tableau 
from $(r, \{\theta\}, FV(\theta))$ iff $\theta$ is satisfiable in a constant domain model.
 \end{theorem}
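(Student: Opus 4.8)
The plan is to mirror the structure of the proof of Theorem~\ref{thm: Bundle increasing decidability}, with the two directions handled separately. For the $(\Leftarrow)$ direction (completeness of the tableau search), I would show that each rule application preserves satisfiability of the label's formula set over constant domain models of domain exactly $D_\theta$; then an open tableau is guaranteed to exist whenever $\theta$ is satisfiable, because a satisfiable label set can never contain a literal together with its negation. The only nontrivial rule is $(\BR)$. So I would start with a constant domain model $M=(W,D,\delta,R,\rho)$ with $D=D_\theta$, a world $w$ and an assignment $\eta$ with $M,w,\eta\models\bigwedge\Gamma$ for the premise label $\Gamma$ of $(\BR)$, unpack the semantics of $\existsBox{x_j}\phi_j$ to get witnesses $a_j\in D_\theta$ and of $\forallDiamond{y_i}\psi_i$ to get, for each $b\in D_\theta$, a successor $v^b_i$ with $M,v^b_i,\eta[y_i\mapsto b]\models\psi_i$. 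At that same successor all the $\existsBox{x_j}\phi_j$ still hold (they're $\Box$-formulas), hence $M,v^b_i,\eta[y_i\mapsto b]\models\phi_j$ with $\eta(x_j)=a_j$; using cleanliness (via Proposition~\ref{clean-BR}, so $x_j$ is free in $\phi_j$ and nowhere else, $y_i$ free only in $\psi_i$) I can rename $a_j$ to the fresh variable $x_j^{k_j}\in\Var_{x_j}\setminus C$ and rename $b$ to $y$, so that each child label $(wv^y_{y_i}\colon\{\phi_j[x_j^{k_j}/x_j]\}, \psi_i[y/y_i], C')$ is satisfied at the (renamed) successor — choosing the interpretation so that the new variables $x_j^{k_j}$ name $a_j$ and $y$ names $b$. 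The fact that $|\Var_{x_j}|=h$ equals the modal depth guarantees a fresh $x_j^{k_j}$ is always available, because along any root-to-node path $(\BR)$ is applied at most $h$ times and each application consumes at most one element of each $\Var_{x_j}$.

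For the $(\Rightarrow)$ direction I would take an open tableau $T$ rooted at $(r,\{\theta\},FV(\theta))$ and build a constant domain model $M=(W,D_\theta,\delta,R,\rho)$ exactly as before: $W$ is the set of tableau nodes (or their $w$-labels), $\delta(w)=D_\theta$ for all $w$ (constant domain!), $wRv$ iff $v$ is a child of $w$ in $T$, and $\overline{x}\in\rho(P,w)$ iff $P\overline x\in\Gamma$ where $(w,\Gamma,C)$ is the last node with label $w$; openness makes $\rho$ well-defined. The variable $z\in D_\theta$ not in $\theta$ ensures $D_\theta\neq\emptyset$, so this is a legitimate constant domain model. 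Then I would prove the analogue of the Claim: for every tableau node $w$ with last-node label $(w,\Gamma,C)$ and every $\alpha\in\Gamma$, $M,w,id_{D_\theta}\models\alpha$ (restricting the identity assignment to the relevant free variables). This is a subtree induction. Literals are immediate from the definition of $\rho$; the $(\land)$ and $(\lor)$ cases are routine. For $\existsBox{x_j}\phi_j$: the rule $(\BR)$ (or $(\END)$, if no $\forallDiamond{}$ formulas are present — but then there are no successors and $\existsBox{}$ holds vacuously with witness any element of $D_\theta$) puts $\phi_j[x_j^{k_j}/x_j]$ into every child; by the induction hypothesis $M,wv,id\models\phi_j[x_j^{k_j}/x_j]$ for every child $wv$, i.e. $M,wv,id[x_j\mapsto x_j^{k_j}]\models\phi_j$ by cleanliness (since $x_j^{k_j}$ is free in $\phi_j[x_j^{k_j}/x_j]$ exactly where $x_j$ was, and $x_j$ does not otherwise occur), so with witness $x_j^{k_j}\in D_\theta$ we get $M,w,id\models\existsBox{x_j}\phi_j$. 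For $\forallDiamond{y_i}\psi_i$: the rule creates a branch $wv^y_{y_i}$ for \emph{each} $y\in D_\theta$, carrying $\psi_i[y/y_i]$; by the induction hypothesis each such child satisfies $\psi_i[y/y_i]$, hence $M,wv^y_{y_i},id[y_i\mapsto y]\models\psi_i$, which is exactly what $\forallDiamond{y_i}\psi_i$ at $w$ demands. Pushing the claim down to the root gives $M,r,id_{FV(\theta)}\models\theta$. Termination of the tableau (needed for this to be an algorithm) follows as before: every rule application strictly decreases either the number of boolean connectives or the maximum modal depth in a label, so $T$ is finite.

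The main obstacle — the point that genuinely differs from the increasing-domain case and deserves the most care — is verifying that the fresh-variable budget never runs out, i.e. that at every application of $(\BR)$ along every branch and for every $j$ there really is some $x_j^{k}\in\Var_{x_j}\setminus C$ still available. This rests on two facts that must be stated and used precisely: (i) by cleanliness and Proposition~\ref{clean-BR}, the variables of $\Var_{x_j}$ are introduced into $C$ \emph{only} by instances of $(\BR)$ that fire on a node whose label contains that very subformula $\existsBox{x_j}\phi_j$ at top level; and (ii) because $\theta$ is clean, a given $\existsBox{x_j}\phi_j$ can be ``active at top level'' at nodes of at most $h$ distinct modal depths along any one branch (each nested modality strictly increases depth, and depth is bounded by $h$), so at most $h$ elements of $\Var_{x_j}$ are ever put into $C$ on one branch. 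Since $|\Var_{x_j}|=h$, one fresh witness always remains; together with the disjointness $\Var_{x_j}\cap\Var_{x_{j'}}=\emptyset$ this makes the rule well-defined. A secondary subtlety worth a sentence is the $(\END)$ case in the $(\Rightarrow)$ direction: when $(\END)$ applies, the node has no successors, so every $\existsBox{x_j}\phi_j$ in its label is satisfied vacuously (the ``for all successors'' clause is empty) with witness any element of $D_\theta$ — this is why $D_\theta$ was required to be non-empty.
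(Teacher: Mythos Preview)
Your plan mirrors the paper's proof: soundness by reading a constant-domain model (domain $D_\theta$, identity assignment) off the open tableau via subtree induction, and completeness by showing each rule preserves satisfiability. The soundness sketch is essentially what the paper does.

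There is, however, a genuine gap in your completeness argument. You restrict to models ``of domain exactly $D_\theta$'' and start from one with $D = D_\theta$, but the hypothesis only gives satisfiability in \emph{some} constant domain model; assuming $D = D_\theta$ up front begs the question. The paper keeps the domain $D$ arbitrary: from $M,w,\eta \models \bigwedge\Gamma$ it extracts semantic witnesses $c_j \in D$ for the $\existsBox{x_j}\phi_j$, sets $\eta' = \eta[\overline{x_j^{k_j}} \mapsto \overline{c_j}]$, and for the branch indexed by $y \in D_\theta$ takes the successor supplied by the $\forallDiamond{y_i}\psi_i$-semantics at the element $\eta'(y) \in D$. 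This is where the paper makes an explicit case split you gloss over: if $y \notin \{x_1^{k_1},\dots,x_n^{k_n}\}$ then $\eta'(y) = \eta(y)$, but if $y = x_j^{k_j}$ then $\eta'(y) = c_j$, and one must verify that the $\phi_j[x_j^{k_j}/x_j]$'s and $\psi_i[y/y_i]$ hold simultaneously under that single assignment at the chosen successor. Your phrase ``rename $a_j$ to $x_j^{k_j}$ and $b$ to $y$'' does not separate these cases, and under your $D = D_\theta$ assumption it reads as a renaming of \emph{domain elements} rather than a change of assignment.

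Your fresh-variable budget argument is also incorrect as stated. Claim (i) is false: since $\Var_{x_j} \subseteq D_\theta$, an element of $\Var_{x_j}$ can enter $C$ as the branching index $y$ at \emph{any} $(\BR)$ node along the path, not only where $\existsBox{x_j}\phi_j$ is top-level. And even granting your (i) and (ii), the arithmetic fails: ``at most $h$ elements put into $C$'' with $|\Var_{x_j}| = h$ leaves zero fresh, not one. The correct count is that when $\existsBox{x_j}\phi_j$ becomes top-level, at most $h-1$ applications of $(\BR)$ precede it on the path (the current label still has modal depth $\ge 1$), each adding one $y \in D_\theta$ to $C$; earlier witness-variables lie in other $\Var_x$'s by disjointness; hence $|C \cap \Var_{x_j}| \le h-1$ and a fresh $x_j^k$ remains.
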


 \begin{proof}

The structure of the proof is very similar to the earlier one, but we need to be
careful to check that sufficient witnesses exist as needed, since the domain is
fixed at the beginning of tableau construction. The proposition above, that the
rule $(\BR)$ preserves cleanliness of formulas, does the bulk of the work. The
details are presented in Appendix B.

\end{proof}

The complexity of the decision procedure does not change from before, since we
add only polynomially many new variables.
 
 \begin{corollary}
 \label{coro: BundleEB constant sat in PSPACE}
 The satisfiability problem for $\BundleEB$-formulas over constant domain models is 
\PSPACE-complete. 
 \end{corollary}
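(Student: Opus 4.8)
The plan is to establish the two directions of \PSPACE-completeness separately: membership is read off from the tableau procedure behind Theorem~\ref{thm: BundleEB constant decidability}, and hardness is inherited from propositional modal logic, exactly as in the increasing-domain case.

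For membership, I would first perform the routine preprocessing: an arbitrary $\BundleEB$-formula can be rewritten in polynomial time into an equisatisfiable \emph{clean} formula $\theta$ in NNF (pushing negations through $\existsBox{x}$ produces $\forallDiamond{x}$, so NNF stays inside $\BundleEB$), and one then computes $D_\theta$. The key quantitative point is that $D_\theta$ is small: it consists of $\FV(\theta)$, one dummy variable, and for each of the at most $|\theta|$ variables in $\Var^\exists(\theta)$ exactly $h$ fresh variables, where $h\le|\theta|$ is the modal depth of $\theta$; hence $|D_\theta|=O(|\theta|^2)$. By Theorem~\ref{thm: BundleEB constant decidability}, $\theta$ is constant-domain satisfiable iff there is an open constant tableau from $(r,\{\theta\},\FV(\theta))$, so it suffices to search for such a tableau in polynomial space.

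I would then bound the search. Every node label is a set of ``decorated subformulas of $\theta$'' — subformulas in which some variables of $\theta$ have been replaced by variables from the fixed set $D_\theta$, hence each of size $\le|\theta|$ — together with a set $\fixed\subseteq D_\theta$. Since $(\land)$ increases the number of formulas in a label by one, and $(\BR)$ resets it to at most $|\theta|+|D_\theta|+1$ formulas, every label is storable in polynomial space. For depth: each $(\land)$ or $(\lor)$ strictly decreases the number of Boolean connectives in the label, $(\BR)$ is applied only when none remain and strictly decreases the maximum modal depth of the label's formulas, and $(\BR)$ requires a $\forallDiamond{}$-formula (which has modal depth $\ge 1$); hence every branch consists of at most $h\le|\theta|$ blocks each with at most $|\theta|$ Boolean steps, i.e.\ has length $O(|\theta|^2)$. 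The tableau is explored as an AND/OR tree — a $(\lor)$ node succeeds if \emph{some} child does, a $(\land)$ or $(\BR)$ node succeeds if \emph{all} children do, and a node fails outright if its label contains a literal together with its negation — by the standard depth-first procedure that keeps only the current root-to-node path and reuses the space of already-verified siblings. Path length and per-node space being polynomial, this runs in \PSPACE; this is precisely the remark ``the complexity does not change, we add only polynomially many new variables'', made explicit. For hardness, I would reduce from satisfiability of propositional modal logic $\mathsf{K}$ (\PSPACE-hard by Ladner): read propositional atoms as $0$-ary predicates, and translate $(\cdot)^\circ$ homomorphically on Booleans with $(\Box\chi)^\circ:=\existsBox{x}\chi^\circ$ and $(\Diamond\chi)^\circ:=\forallDiamond{x}\chi^\circ$, where $x$ is fresh and does not occur in $\chi^\circ$, so that each bundled quantifier is vacuous and the bundle acts exactly like the corresponding propositional modality (this holds because local domains are non-empty). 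A propositional $\mathsf{K}$-model is precisely a constant-domain model with a singleton domain interpreting the $0$-ary predicates, and any constant-domain model of $\chi^\circ$ restricts, on its $0$-ary predicates, to a $\mathsf{K}$-model of $\chi$; hence $\chi$ is $\mathsf{K}$-satisfiable iff $\chi^\circ$ is constant-domain satisfiable, and $\chi\mapsto\chi^\circ$ is polynomial-time computable.

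The step requiring the most care is the \PSPACE\ membership argument: one must check that the very large branching of $(\BR)$ — it spawns one branch per element of $D_\theta$ for each $\forallDiamond{}$-formula — is harmless, which it is because only the \emph{depth} of the tableau and the \emph{per-node} space enter the bound, while a depth-first AND/OR search visits the branches one at a time; and one must confirm that labels never blow up, which holds because $(\BR)$ only ever substitutes variables drawn from the \emph{fixed} finite set $D_\theta$ into subformulas of the fixed formula $\theta$, so every formula appearing anywhere in the tableau has size at most $|\theta|$. These points are routine for modal tableaux, but they are exactly where the constant-domain subtlety — a pre-fixed domain rather than one grown on demand — has to be accounted for.
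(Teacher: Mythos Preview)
Your proposal is correct and follows essentially the same approach as the paper: membership via the polynomial-depth, polynomial-space tableau from Theorem~\ref{thm: BundleEB constant decidability} (the paper merely remarks that ``the complexity does not change from before, since we add only polynomially many new variables''), and hardness inherited from propositional $\mathsf{K}$ by treating the bundled quantifier as vacuous. You have simply spelled out in detail what the paper leaves as a one-line corollary.
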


\section{Between Constant Domain and Increasing Domain}
We now show that the $\BundleEB$ fragment cannot distinguish increasing domain 
models and constant domain models. Note that in $\FOML$ this distinction is captured 
by the Barcan formula $\forall x \Box \phi\to \Box \forall x \phi$; but this 
is not expressible in  $\BundleEB$.\footnote{However, with equality added in the 
language we can distinguish the two by a formula. We can also accomplish this
in the $\forall\Box$ fragment: $\forall x \Box \forall y \Box \neg p(x) \land \forall z \Box \exists x \Diamond \neg p(x)$}

The tableau construction for the $\BundleEB$ fragment over increasing domain
models is a restriction of the one in the last section, and was presented in
\cite{Wang17d}.

 \begin{center}
  \noindent \begin{tabular}{|c|}
 \hline
 
 Given $n,s \geq 0;\ m\geq 1$:\\
 $\dfrac{w:\existsBox{x_1}\phi_1, \dots, \existsBox{x_n}\phi_n,\forallDiamond{y_1}\psi_1,\dots, \forallDiamond{y_m}\psi_m, r_1\dots r_s,F}{\langle (wv^y_{y_i}: \{\phi_j\mid 1\leq j\leq n\}, \psi_i[y\slash y_i], F') \rangle where\ y\in F', i\in [1,m]} $ \tt{($\BR$)}\\
 \hline 
 
 \end{tabular}\\
 \noindent where $F'=F\cup\{x_j\mid j\in [1,n] \}$.
 \end{center}


\begin{theorem}
For any $\BundleEB$ formula  $\phi$ satisfiable on some increasing domain model, 
the constant domain tableau of $\phi$ is open. 
\end{theorem}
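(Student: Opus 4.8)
The plan is to show that every increasing-domain model for a $\BundleEB$ formula $\phi$ can be "completed" into a constant-domain model satisfying $\phi$, and then to feed this constant-domain model into the soundness half of Theorem~\ref{thm: BundleEB constant decidability} to conclude that the constant-domain tableau of $\phi$ is open. So the real content is a model-transformation lemma: \emph{if $\phi \in \BundleEB$ is satisfiable on an increasing-domain model, it is satisfiable on a constant-domain model.} Given such a lemma, the theorem is immediate, since Theorem~\ref{thm: BundleEB constant decidability} says $\phi$ (taken in clean NNF, which is harmless up to logical equivalence) has an open constant-domain tableau iff it is constant-domain satisfiable.

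For the model transformation, suppose $M = (W, D, \delta, R, \rho)$ is increasing-domain with $M, w_0, \sigma_0 \models \phi$. First I would fix a fresh object $\ast$ (or reuse some element of $\delta(w_0)$) and define $M' = (W, D, \delta', R, \rho')$ with $\delta'(w) = D$ for all $w$ — i.e.\ simply make the domain constant — leaving $R$ unchanged and setting $\rho'(P, w)$ to agree with $\rho(P, w)$ on tuples from $\delta(w)$ and to be whatever is convenient (say, empty, or "copy" behaviour) on the new tuples. The key claim to prove by induction on $\BundleEB$ formulas $\psi \in \Sub(\phi)$ is: for every $w$ and every assignment $\sigma$ with $\sigma(x) \in \delta(w)$ for all $x \in \FV(\psi)$, we have $M, w, \sigma \models \psi \iff M', w, \sigma \models \psi$. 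The atomic and boolean cases are routine from the definition of $\rho'$. The only interesting case is $\psi = \existsBox{x}\chi$: in the $(\Rightarrow)$ direction the witness $d \in \delta(w)$ from $M$ still lies in $\delta'(w) = D$, and the induction hypothesis applies at each $R$-successor $v$ since $\delta(w) \subseteq \delta(v)$ in the increasing model, so $\sigma[x\mapsto d]$ remains relevant and $\FV(\chi)$-valued in $\delta(v)$. The $(\Leftarrow)$ direction is where one must be careful: a constant-domain witness $d \in D$ need not lie in $\delta(w)$.

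The main obstacle is precisely this $(\Leftarrow)$ direction — handing back a \emph{legitimate} (increasing-domain) witness. The natural fix, and the reason $\BundleEB$ behaves so much better than the $\forall\Box$ bundle, is that $\existsBox{x}\chi$ only ever \emph{adds} existential witnesses; it never forces us to range over the whole domain at a successor. So I would instead not aim for a literal equivalence but rather do the transformation in the direction that is actually needed: show directly that $M, w_0, \sigma_0 \models \phi$ implies $M', w_0, \sigma_0' \models \phi$ for a suitable $\sigma_0'$, proving by induction the one-directional statement "$M, w, \sigma \models \psi \Rightarrow M', w, \sigma \models \psi$" for all $\psi \in \Sub(\phi)$ and all $\sigma$ relevant at $w$ in $M$ — which is all that is required, and whose $\existsBox{x}\chi$ case goes through cleanly as sketched above. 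Alternatively, and perhaps more in the spirit of the paper, one can bypass the semantic lemma entirely and argue tableau-to-tableau: take an open \emph{increasing-domain} tableau for $\phi$ (which exists by \cite{Wang17d} since $\phi$ is increasing-domain satisfiable), and show it embeds into a branch of the constant-domain tableau — the increasing-domain $(\BR)$ rule picks witnesses from $F' = F \cup \{x_j\}$, and each freshly introduced $x_j$ can be mapped injectively to one of the reserved fresh copies $x_j^{k} \in \Var_{x_j}$ used by the constant-domain rule, with the extra $\forall\Diamond$-branches over the larger domain $D_\theta$ handled by the already-established soundness. Either route works; I expect the cleanest writeup is the one-directional semantic lemma feeding into Theorem~\ref{thm: BundleEB constant decidability}, with the $\existsBox{}$ induction step being the only place that genuinely uses that there are no $\forall\Box$ bundles.
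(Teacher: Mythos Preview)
Your primary semantic route has a genuine gap. The one-directional statement ``$M, w, \sigma \models \psi \Rightarrow M', w, \sigma \models \psi$'' cannot be pushed through by induction on $\Sub(\phi)$, because the dual $\forallDiamond{x}$ is present (either explicitly in NNF, or implicitly via the $\neg$ case if you work with the raw $\BundleEB$ syntax), and that case fails for your $M'$. Concretely: take $\phi = \forallDiamond{x}\,P(x)$, and the two-world increasing model with $wRv$, $\delta(w)=\{a\}$, $\delta(v)=\{a,b\}$, $\rho(P,v)=\{a\}$. Then $M,w \models \phi$ (the only $d\in\delta(w)$ is $a$, and $P(a)$ holds at $v$), but in your $M'$ with constant domain $\{a,b\}$ and $\rho'$ agreeing with $\rho$ on old tuples, the instantiation $x\mapsto b$ has no successor where $P(b)$ holds, so $M',w\not\models\phi$. ``Whatever is convenient'' for $\rho'$ on new tuples does not rescue this; you would need a coherent cloning/retraction of new elements onto old ones along $R$ together with a bisimulation-style invariant, which is substantially more than you have written and is where the actual difficulty lives. (Equivalently: if you do the induction in the raw syntax, the $\neg\psi$ case needs the \emph{converse} implication, which you explicitly flagged as problematic and then did not address.)

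Your tableau-to-tableau alternative is much closer to what the paper does, but you are missing its key device. The paper does not try to fill in the extra $\forallDiamond{}$-branches over $D_\theta$ ``by the already-established soundness'' (that would be circular here). Instead it first replaces $\phi$ by $\phi' = \phi \land \bigwedge_{x'\in\Var^+(\phi)} \exists x'\Box\top$, which is increasing-domain equisatisfiable with $\phi$; the point is that the open \emph{increasing-domain} tableau for $\phi'$ now has every element of $D_\theta$ in the root's local domain from the start, so it already contains all the $\forallDiamond{}$-branches that the constant tableau requires. From there one prunes certain redundant branches and performs exactly the renaming you suggest (each freshly introduced $x_j$ becomes the first unused $x_j^h\in\Var_{x_j}$), obtaining an open constant tableau for $\phi$. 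The $\phi'$ trick is what makes the embedding honest; without it your ``handled by soundness'' step is not a proof.
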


\begin{proof}(Sketch)
We give a proof sketch. Consider a clean $\BundleEB$ formula $\phi$, and let
$\phi'=\phi\land \bigwedge \{\exists x' \Box \top \mid x' \in \Var^+(\phi)\}$ (recall that $\Var^+(\phi)=\bigcup_{x \in \Var^\exists(\phi)} \Var_x$). Clearly $\phi$ is satisfiable
in an increasing domain model iff $\phi'$ is as well. Let $T$ be an open tableau
for $\phi'$. We show that $T$ can be transformed into a constant open tableau $T'$
for $\phi$.

Suppose if $T$ has no applications of $(\BR)$, it is also a constant tableau and we are
done, so suppose that $T$ has at least one application of the rule $(\BR)$. By 
construction, all the $x'\in\Var^+(\phi)$ are added to the domain of the root, 
thus they are also at all the local domains in $T$. Note that we may have more 
elements in the local domains, such as $x$  that get added when we apply $\BR$ 
to $\exists x \Box \phi$, and therefore there are more branches than needed for 
a constant domain tableau of $\phi$ (such as those for $x$).

We can get rid of them by the following process:
\begin{itemize}
\item Fix $\psi=\exists x \Box \theta\in \Sub(\phi)$: 
\begin{itemize}
\item  Fix a node $s$ where  $\BR$ rule is applied and $\psi$ is in $s$. Since
$\phi$ is clean, there is no other node in any path of $T$ from the root passing 
through $s$ such that $\exists x \Box \theta'\in \Sub(\phi)$ occurs for some $\theta'$.
Let $m$ be the modal depth of $\phi$. The path from the root to the 
predecessor of $s$ can use at most $m-1$ different variables in  $\Var_x(\phi)$ 
when generating successors by applying the $\BR$ rule to some $\forall y\Diamond \theta$ 
formula.  Pick the first $x^h\in\Var_x$ which is not used in the path up to this node.  

\item Delete all the descendent nodes of $s$ that are named using $x^h$ when 
applying $\BR$ to some $\forall y \Diamond$ formula. It is not hard to see that 
the resulting sub-tableau rooted at $s$ has no occurrence of $x^h$ at all since 
$x^h$ could only be introduced among the children of $s$ using $\BR$. 

\item Rename all the occurrences of $x$ by $x^h$ (in formulas and node names) 
in all the descendent nodes of $s$. Then the branching structure from the 
sub-tableau rooted at $s$ will comply with the $\BR$ rule for constant-domain tableau. 

\item Repeat the above for all the application nodes of the $\BR$ rule w.r.t.\ $\psi$
\end{itemize}

\item Repeat the above procedure for all $\psi$ of the form $\exists x \Box \theta
\in \Sub(\phi)$.
\end{itemize}

The core idea is to simply use the newly introduced variable $x$ as if it were
$x^h$ in a constant-domain tableau.  Note that each branch-cutting operation and 
renaming operation (by new variables) above will preserve openness, since openness 
is merely about contradictions among literals. We then obtain a constant domain 
tableau by setting the domain as $D_\phi$.

\end{proof}

Note  that the constant domain tableau $T$ of $\phi$ constructed is a sub-tree 
embedding inside the increasing domain tableau $T'$ of $\phi'$.  However, showing
that it is generated precisely by the tableau rules in Section 4.2 involves some
tedious detail.

 \section{Discussion}
We have considered a decidable fragment of $\FOML$ over increasing domain models, by bundling quantifiers  together with modalities, and shown it to have the same complexity as propositional modal logic, while admitting arbitrary $k$-ary predicates. Considering that most decidable fragments of $\FOML$ involve severe syntactic restrictions involving quantifiers and variables, we have an interesting fragment for study. The tableau procedure offers a method of reasoning in the logic as well.

We also have a cautionary tale.  The $\exists \Box$ bundle, well motivated 
by considerations from epistemic logic (\cite{Wang17d}) is shown to be 
robustly decidable, for both constant domain and increasing domain semantics, 
whereas the $\forall \Box$ bundle is undecidable over constant domain models.

It should be emphasized that this paper is envisaged as a study of `bundling'
quantifiers and modalities (in terms of decidability) rather than proposing
{\sf the} definitive bundled fragment. The bundle $\Box \forall$ appears to
have properties similar to that of $\forall \Box$ (over constant domain
models) but $\Box \exists$ seems to be interestingly different. All the four combinations play important roles in Barcan formula, Buridan formula and their converses. Further,
it is not inconceivable that a bundle inspired by a particular shape of quantifier prefix such as $\exists x_1\dots \exists x_n \Box$ or $\exists x_1 \dots \exists x_n\forall z_1\dots\forall z_n\Box$ might be worthy of study, with their own motivation based on our knowledge about decidable prefix fragments of first-order logic. 

An obvious extension is to consider  the
language with constants, function symbols and equality. This leads to
not only familiar interesting conundrums regarding rigid identifiers
but computational considerations as well. Another direction is to find decidable bundled fragments of $\FOML$ over specific frame classes (such as T, S4, S5 etc.). It would be interesting to see how other non-normal modalities behave in the bundled fragments.

\bibliographystyle{aiml18}
\bibliography{ptml}

\medskip

\section*{Appendix A: Details of undecidability proof}
Here we present the proof details for undecidability of the $\BundleED$ over
constant domain models.

For any {\em quantifier free} $\FO(R)$ formula $\alpha$, we first recall the translation 
of $\alpha$ to $\BundleED$ formula $\phi_\alpha$ inductively as follows. 

\begin{itemize}
\item[-] $\phi_{R(x,y)} ::= \exists z \Diamond \big( p(x) \land q(y)\big)$, where $z$ is
distinct from $x$ and $y$. 
\item[-] $\phi_{\neg \alpha} ::= \neg \phi_{\alpha}$.
\item[-] $\phi_{\alpha_1 \land \alpha_2} ::= \phi_{\alpha_1} \land \phi_{\alpha_2}$.
\end{itemize}

Now consider an $\FO(R)$ sentence $\alpha$ (having no free variables) and presented in 
prenex form: $Q_1 x_1\ Q_2 x_2 \cdots Q_n x_n (\beta)$ where $\beta$ is quantifier free.
We define $\psi_\alpha$ to be the conjunction of the following three sentences:

\begin{itemize}
\item[-] $\psi_1 ::= Q_1 x_1 \Delta_1\  Q_2 x_2 \Delta_2\ \cdots Q_n x_n\Delta_n\  (\phi_{\beta})$ \\
where $Q_i x_i \Delta_i :=  \existsDiamond{x_i}$ if $Q_i = \exists$ and
$Q_i x_i \Delta_i := \forallBox{x_i}$ if $Q_i = \forall$.

\item[-] $\psi_2 ::= \forallBox{z_1}\forallBox{z_2}\big( (\existsDiamond{z})^{n}(\exists z\Diamond (p(z_1) \land q(z_2))) \Rightarrow (\forallBox{z})^n(\exists z\Diamond (p(z_1) \land q(z_2))) \big)$.

\item[-] $\psi_3 ::= \bigwedge_{j=1}^{n+2}(\forall\Box{z})^{j}\exists z\Diamond \top$.
\end{itemize}

\begin{theorem}
\label{thm: bundle-ED undecidability on constant domain}
An $FO(R)$ sentence $\alpha$ is satisfiable iff the $\BundleED$ sentence $\psi_\alpha$ is
constant domain satisfiable.
\end{theorem}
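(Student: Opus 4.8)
The plan is to prove the two directions separately, matching a given first-order structure $(D,I)$ for $\FO(R)$ with a layered Kripke model, and conversely extracting from any constant-domain model of $\psi_\alpha$ a first-order structure. For $(\Rightarrow)$, I would start from $(D,I) \models \alpha$ and build the explicit model $M$ already described in the excerpt: a spine $v_1 \to v_2 \to w_1 \to \dots \to w_n$ of length $n+2$, followed at $w_n$ by a bouquet of worlds $\{u_d \mid d \in D\}$, with constant domain $D$ throughout, and with $p,q$ empty everywhere except that $\rho(u_d,p) = \{d\}$ and $\rho(u_d,q) = \{c \mid (d,c)\in I\}$. The core observation is that along the spine $\Box$ and $\Diamond$ coincide (each world has a unique successor), so a modal prefix of the form $(\forallBox{z})^j$ or $(\existsDiamond{z})^j$ applied at $v_1$ simply walks down the spine; and that the coding formula $\phi_{R(x,y)} = \existsDiamond{z}(p(x)\land q(y))$ evaluated at $w_n$ holds for an assignment sending $x \mapsto a$, $y \mapsto b$ exactly when there is a child $u_a$ with $b \in \rho(u_a,q)$, i.e.\ exactly when $(a,b)\in I$. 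A routine induction on quantifier-free $\beta'$ then gives $(D,I) \models \beta'[\overline d]$ iff $M, w_n \models \phi_{\beta'}[\overline d]$, and a reverse induction on the quantifier prefix of $\alpha$ (base case $i = n$ at $w_n$, each step moving one world up the spine and using that $\existsDiamond{}$ and $\forallBox{}$ collapse to the unique-successor move) gives $M, v_1 \models \psi_1$. Checking $M, v_1 \models \psi_3$ is immediate from the spine having length $n+2$ and the fan-out at $w_n$ being nonempty (as $D\neq\emptyset$); checking $M, v_1 \models \psi_2$ is also easy since every world at depth $n+3$ from $v_1$ is some $u_d$'s successor in the coding gadget, but actually the key point is that the antecedent of $\psi_2$ forces one to be at depth $n+1$ where the $R$-interpretation was already pinned down, so the consequent holds vacuously-or-by-agreement.

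For $(\Leftarrow)$, suppose $M = (W,D,R,\gamma,V)$ is a constant-domain model with $M,v \models \psi_\alpha$. First unravel $M$ into a tree rooted at $v$ (this preserves truth of $\Bundle$-formulas, as usual, and preserves constant domain), and use $\psi_3$ to note every branch has length at least $n+2$, hence after the coding quantifier there are enough worlds at depth up to $n+3$. Define, for each world $u'$ at height $n+3$, the relation $I_{u'} = \rho(u',p) \times \rho(u',q)$; for $u$ at height $n+2$ set $I_u = \bigcup\{I_{u'} \mid (u,u')\in R\}$. The role of $\psi_2$ is precisely to force $I_u = I_w$ for all $u,w$ at height $n+2$: the $\forallBox{z_1}\forallBox{z_2}$ prefix quantifies over all pairs of domain elements (constant domain is essential here — every element is available at every world), and the implication says that if $(z_1,z_2)$ is witnessed as a coded $R$-pair somewhere at depth $n+1$, it is witnessed at all worlds at that depth, which propagates down to make all $I_u$ at height $n+2$ equal. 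So $I := I_u$ is well-defined, giving a first-order structure $M' = (D,I)$. Then I claim $M' \models \alpha$, proved by induction on the quantifier prefix: for a quantifier-free $\beta'$ and a world $u$ at the appropriate height with assignment $\overline d$, show $M,u \models \phi_{\beta'}[\overline d]$ iff $M' \models \beta'[\overline d]$ (the atomic case uses the definition of $I$ together with the fact that at the relevant depth all worlds agree, so the $\exists z \Diamond$ witness found anywhere is found here); then handle the prefix $Q_i x_i \Delta_i$ by translating $\existsDiamond{x_i}$ into ``there is a child and a domain element'' (matching $\exists x_i$) and $\forallBox{x_i}$ into ``all children and all domain elements'' (matching $\forall x_i$), crucially using that all children of a tree node validate the same subformulas-under-assignments, so a $\forallBox{}$ claim at a node is equivalent to the claim at any one child, and that constant domain makes the domain-quantification range over all of $D$ at every depth.

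The main obstacle, and the step I would spend the most care on, is the bookkeeping of modal depths in the $(\Leftarrow)$ direction — in particular verifying that $\psi_2$ genuinely pins down a single interpretation of $R$ at the right layer. The subtlety is that the coding $\phi_{R(x,y)} = \existsDiamond{z}(p(x)\land q(y))$ itself consumes one modal step, so the ``$R$-layer'' sits at depth $n+1$ but its witnesses (the worlds carrying $p$ and $q$) are at depth $n+2$, and because variables in the bundled fragment cannot appear without an accompanying modality, the dummy quantifiers $\forallBox{z_1}\forallBox{z_2}$ in $\psi_2$ push everything down to depth $n+3$ — so one must chase indices carefully to confirm that the antecedent $(\existsDiamond{z})^n(\exists z\Diamond(p(z_1)\land q(z_2)))$ and consequent $(\forallBox{z})^n(\exists z\Diamond(p(z_1)\land q(z_2)))$ of $\psi_2$ really quantify over the worlds whose union defines $I_u$ at height $n+2$, and that $\psi_3$'s bound $n+2$ (rather than $n+1$ or $n+3$) is exactly what guarantees no branch is too short to carry out this argument while not forcing spurious extra depth. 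I would set up a single lemma stating, for each $i$, the equivalence between the $i$-th tail of $\psi_1$ evaluated at a depth-$i$ world and the $i$-th tail of $\alpha$'s prefix, and discharge all the depth arithmetic once inside that lemma's inductive step; everything else is routine.
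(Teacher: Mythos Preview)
Your proposal is correct and follows essentially the same approach as the paper's own proof: the same spine-plus-bouquet model for the forward direction with the same reverse induction on the quantifier prefix, and the same tree-unravelling, layer-$n{+}3$ extraction of $I$, and height-indexed equivalence claim for the backward direction. Your identification of depth bookkeeping in $\psi_2$/$\psi_3$ as the delicate point is exactly where the paper also concentrates its care, and your suggested single indexed lemma (tail of $\psi_1$ at a depth-$i$ world versus tail of $\alpha$) is precisely the claim the paper states and proves in both directions.
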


\begin{proof}
Fix $\alpha ::= Q_1 x_1 \cdots Q_n x_n \beta$, where $\beta$ is quantifier free. 
To prove $(\Rightarrow)$, assume that $\alpha$ is satisfiable. Let $D$ be some
domain such that $(D,I) \models \alpha$ where $I \subseteq (D \times D)$ is the 
interpretation for $R$.

Define $M = (W,R,D,\delta,\rho)$ where: 
\begin{itemize}
\item[ ]$W = \{ v_1, v_2 \} \cup \{w_i \mid 1 \le i \le n\} \cup \{ u_d \mid d \in D\}$.
\item[ ] $R =  \{ (v_1,v_2), (v_2,w_1)\} \cup \{ (w_i, w_{i+1}) \mid 1 \le i < n\} \cup \{ (w_n,u_d) \mid u_d \in W\}$.
\item[ ] $\delta(u) = D$ for all $u \in W$.
\item[ ] For all $i \in \{ 1,2\}$ and $1\le j \le n$ and $v_i,w_j \in W$ define $\rho(v_i,p) = \rho(v_i,q) =  \rho(w_j, p) = \rho(w_j,q) = \emptyset$ and for all $u_d \in W,\ \rho(u_d,p) = \{ d\}$ and $\rho	(u_d, q) = \{ c \mid (d,c) \in I \}$.
\end{itemize}

By construction, $M$ is a model that is a path of length $n+2$ originating from
$v_1$ until $w_n$ at which point we have a tree of depth $1$, with children $u_d$,
one for each $d \in D$. Therefore, it is easy to see that $M, v_1 \models \psi_3$.

Note that $M$ is a constant domain model. Further, it can be easily checked that
$(a,b) \in R$ iff $M, u_a \models (p(a) \land q(b))$. Thus $(D,I) \models R(x,y)$ iff
$M, w_n \models \exists z\Diamond (p(x) \land q(y))$. Hence a routine induction shows
that for any quantifier free formula $\beta'$, $(D,I) \models \beta'$ iff $M, w_n \models 
\phi_{\beta'}$. Further, since $M$ is a path model until $w_n$ and there is apath of length $n+3$ starting from $v_1$, we see that
$M, v_1 \models \psi_2 \land \psi_3$. We now claim that $M, v_1 \models \psi_1$, which would
complete the forward direction of the proof.

First, some notation. For all $1 \le i \le n$ let $x_1\cdots x_i$ be denoted by 
$\bar{x^i}$ and $\bar{d^i}$ be a vector of length $i$ with values in $D$. Let 
$[\bar{x^i} \rightarrow \bar{d^i}]$ denote the interpretation where $\sigma(x_j) 
= d_j$. Further, for $1 \leq i < n$, let $\alpha[i] = Q_{i+1} x_{i+1} \cdots 
Q_n x_n \beta$ and let $\psi_1[i] = Q_{i+1} \Delta _{i+1} \cdots Q_n x_n\Delta_n
\phi_\beta$.

The following claim proves that $M,v_1 \models \psi_1$: 

\paragraph{Claim.}  For all $1 \le i \le n$, $w_i \in W$, for all $d_1\cdots d_i \in D$, 
we have  $D,I,[\bar{x^i} \rightarrow \bar{d^i}] \models \alpha[i] $
iff $M,w_i,[\bar{x^i} \rightarrow \bar{d^i}] \models \psi_1[i]$.

The proof is by reverse induction on $i$. The base case, when $i = n$, follows from
our assertion above on the interpretation of $R$ at $w_n$.

Now for the induction step, we need to consider formulas $\alpha[i-1]$ and
$\psi_1[i-1])$ at $w_{i-1}$. Now $\alpha[i-1]$ is either $\exists {x_i} \alpha[i]$
or $\forall {x_i} \alpha[i]$. 

For the case when $\alpha[i-1]$ is $\exists x_{i} \alpha[i]$ the corresponding formula is 
$\existsDiamond{x_{i}}\psi_1[i]$.  We have \\
$D,I, [\bar{x^{i-1}} \rightarrow \bar{d^{i-1}}] \models \exists x_{i} \alpha[i]$ iff 
there is some $c\in D$ such that \\
$D,I,[\bar{x^{i-1}} \rightarrow \bar{d^{i-1}}, x_{i}\rightarrow c] \models \alpha[i]$  iff 
(by induction hypothesis) \\
$M,w_{i},[\bar{x^{i-1}}\rightarrow \bar{d^{i-1}}, x_{i}\rightarrow c]  \models \psi_1[i]$ iff  \\
$M,w_{i-1},[\bar{x^{i-1}}\rightarrow \bar{d^{i-1}}] \models \existsDiamond{x_{i}} \psi_1[i]$,
as required.

Now consider the case when $\alpha[i-1]$ is $\forall x_{i} \alpha[i]$, and let some $c \in D$.
Then $D,I,[\bar{x^{i-1}} \rightarrow \bar{d^{i-1}}, x_{i}\rightarrow c] \models \alpha[i]$ and
by induction hypothesis, \\
$M,w_{i},[\bar{x^{i-1}}\rightarrow \bar{d^{i-1}}, x_{i}\rightarrow c]  \models \psi_1[i]$.
Note that this holds for all $c \in D$. Additionally, $w_{i}$ is the unique successor
of $w_{i-1}$, and hence \\
$M,w_{i-1},[\bar{x^{i-1}}\rightarrow \bar{d^{i-1}}] \models \forall x_i\Box  \psi_1[i]$,
as required.

\paragraph{}
To prove $(\Leftarrow)$, suppose that $\psi_\alpha$ is satisfiable, and let  
$M= (W,D,R,\gamma,V)$ be a constant domain model such that $M,v \models \psi_\alpha$.
Without loss of generality, we can assume $(W,R)$ to be a tree rooted at $v$, and
$\psi_3$ ensures that every path in it has length at least $n+3$.  

Let $u'$ be any world at height $n+3$. Define $I_{u'} = \{ (c,d) \mid c \in \rho(u',p),
d \in \rho(u',q)\}$. For world $u$ at height $n+2$, define $I_u = \bigcup \{I_{u'} \mid
(u,u') \in R\}$. Since $M,v \models \psi_2$, we see that $I_u = I_w$, for all $u, w$
at height $n+2$. Hence we unambiguously define $I = I_u$, thus defining the first
order model $M' = (D,I)$. We now claim that the formula $\alpha$ is satisfied in
this model.

Recall that $\alpha = Q_1 x_1\ Q_2 x_2 \cdots Q_n x_n (\beta)$, where $\beta$ 
is quantifier-free.The definition of $I$ and the remark above ensure that 
$(D,I) \models \beta$ iff for all worlds $u$ at height $n+2$, 
$M,u \models \phi_{\beta}$. 

Let $w_i$ denote any world at height $i$, $3 \leq i \leq n+2$.

\paragraph{Claim.}  For all $i$, $3 \leq i \leq n+2$, for all 
$d_1\cdots d_i \in D$, we have: \\
$D,I,[\bar{x^i} \rightarrow \bar{d^i}] \models \alpha[i]$ iff 
for all $w_i \in W$ at height $i$, 
$M,w_i,[\bar{x^i} \rightarrow \bar{d^i}] \models \psi_1[i]$.

The proof of this claim is very similar to the proof in the forward direction
of the theorem, by induction on $n-i$. The base case is settled above and when
we consider the induction step, we strip one quantifier from $\alpha[i]$. 

Consider the case when $\alpha[i-1]$ is $\exists x_{i} \alpha[i]$; the 
corresponding formula is $\existsDiamond{x_{i}}\psi_1[i]$.  We have \\
$D,I, [\bar{x^{i-1}} \rightarrow \bar{d^{i-1}}] \models \exists x_{i} \alpha[i]$ iff
there is some $c\in D$ such that \\
$D,I,[\bar{x^{i-1}} \rightarrow \bar{d^{i-1}}, x_{i}\rightarrow c] \models \alpha[i]$  iff
(by induction hypothesis) \\
$M,w_{i},[\bar{x^{i-1}}\rightarrow \bar{d^{i-1}}, x_{i}\rightarrow c]  \models \psi_1[i]$ 
for every $w_i$ at height $i$ iff  \\
$M,w_{i-1},[\bar{x^{i-1}}\rightarrow \bar{d^{i-1}}] \models \existsDiamond{x_{i}} \psi_1[i]$,
for every $w_{i-1}$ at height $i-1$ as required.

Now suppose that $\alpha[i-1]$ is $\forall x_{i} \alpha[i]$, and let 
$D,I,[\bar{x^{i-1}} \rightarrow \bar{d^{i-1}} \models \forall x_{i} \alpha[i]$. 
Now let some $c \in D$.  Then \\
$D,I,[\bar{x^{i-1}} \rightarrow \bar{d^{i-1}}, x_{i}\rightarrow c] \models \alpha[i]$ and
by induction hypothesis, \\
$M,w_{i},[\bar{x^{i-1}}\rightarrow \bar{d^{i-1}}, x_{i}\rightarrow c]  \models \psi_1[i]$
for every $w_i$ at height $i$. But since the choice of $c$ was arbitrary,
this holds for all $c \in D$. Hence \\
$M,w_{i-1},[\bar{x^{i-1}}\rightarrow \bar{d^{i-1}}] \models \forall x_{i} \Box \psi_1[i]$,
for every $w_{i-1}$ at height $i-1$ as required. The other direction is similar.

\end{proof}

\section*{Appendix B: Details of constant tableau}
We now show that existence of a constant open tableau is equivalent to satisfiability
over constant domain models. We firstly recall a fact familiar from first order logic, 
that will be handy in the proof.

\begin{proposition}\label{prop.handy}
For any $\FOML$ formula $\phi$ and any model $M,w$: 
$$M, w, \sigma \vDash \phi[y\slash x]\iff  M, w, \sigma'[x\mapsto \sigma(y)]\vDash \phi$$
if $\sigma(y) = \sigma'(y)$ for all $y \neq x$ with $y$ not occurring free in $\phi$.
 \end{proposition}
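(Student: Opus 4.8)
The plan is to prove Proposition~\ref{prop.handy} by a routine induction on the structure of $\phi$, in which the only substantive work is the quantifier/modality step; the Boolean cases are immediate and the atomic case is a direct computation. Throughout, write $\sigma^\ast = \sigma'[x\mapsto\sigma(y)]$. I will use the hypothesis in the form: $\sigma$ and $\sigma'$ agree on every variable other than $x$, and the substitution $\phi[y\slash x]$ involves no capture of $y$ (which, as noted just before Theorem~\ref{thm: BundleEB constant decidability}, holds in all our applications, where the substituted variable $y_i$ never occurs in $\psi_i$ at all and, more generally, variables of $D_\theta$ occur only free at tableau nodes). I also note at the outset that $\sigma^\ast$ is relevant wherever $\sigma'$ is, since it differs from $\sigma'$ only in assigning to $x$ the value $\sigma(y)\in D$, so no relevance condition is violated when passing between the two sides.

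First I would do the base case $\phi = P(z_1,\ldots,z_n)$. Here $\phi[y\slash x] = P(z_1',\ldots,z_n')$ with $z_i' = y$ when $z_i = x$ and $z_i' = z_i$ otherwise, and one checks $\sigma(z_i') = \sigma^\ast(z_i)$ for each $i$: when $z_i = x$ this is $\sigma(y) = \sigma^\ast(x) = \sigma(y)$, and when $z_i \neq x$ it is $\sigma(z_i) = \sigma'(z_i)$, which holds by hypothesis. Hence the two membership conditions in $\rho(P,w)$ coincide, giving the equivalence. The cases $\phi = \neg\chi$ and $\phi = \chi_1\land\chi_2$ follow at once from the induction hypothesis, since both $[y\slash x]$ and $\vDash$ distribute over these connectives.

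The main case is $\phi = \existsBox{z}\chi$ (the case $\existsDiamond{z}\chi$, and in full $\FOML$ the standalone cases $\forall z\,\chi$, $\Box\chi$, $\Diamond\chi$, are all analogous). If $z = x$ then $x \notin \FV(\phi)$, so $\phi[y\slash x] = \phi$, and since $\sigma$ and $\sigma^\ast$ differ only at $x \notin \FV(\phi)$ the two sides hold under identical conditions. If $z \neq x$, then (no capture, so) $z \neq y$ and $\phi[y\slash x] = \existsBox{z}(\chi[y\slash x])$. Unfolding the semantics, $M,w,\sigma\vDash\existsBox{z}(\chi[y\slash x])$ iff there is $d\in\delta(w)$ with $M,v,\sigma[z\mapsto d]\vDash\chi[y\slash x]$ for all $v$ such that $wRv$. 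The assignments $\sigma[z\mapsto d]$ and $\sigma'[z\mapsto d]$ still agree off $x$, and $\sigma[z\mapsto d](y) = \sigma(y)$ because $z\neq y$; so the induction hypothesis applied to $\chi$ rewrites the inner condition as $M,v,(\sigma'[z\mapsto d])[x\mapsto\sigma(y)]\vDash\chi$. Since $z\neq x$ the two updates commute, $(\sigma'[z\mapsto d])[x\mapsto\sigma(y)] = \sigma^\ast[z\mapsto d]$, and reading the clause for $\existsBox{z}$ backwards yields $M,w,\sigma^\ast\vDash\existsBox{z}\chi$, as required.

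I expect the only delicate point to be the bookkeeping around variable capture (ensuring $z\neq y$ in the quantifier step) together with the commuting of the updates $[z\mapsto d]$ and $[x\mapsto\sigma(y)]$; once the hypothesis is stated in the right form these are mechanical, and there is no real obstacle.
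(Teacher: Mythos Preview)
Your proof is correct: it is the standard structural induction for the substitution lemma, and each case is handled properly, including the bookkeeping about capture and the commutation of updates in the bundled-quantifier case. The paper itself gives no proof of this proposition at all---it is introduced with ``We firstly recall a fact familiar from first order logic'' and simply stated---so there is nothing to compare against; your argument supplies exactly the routine verification the paper omits.
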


\begin{theorem}
 \label{thm: BundleEB constant decidability}
 For any clean $\BundleEB$-formula $\theta$ in NNF, the following are equivalent: 
 \begin{itemize}
 \item There is an open constant tableau from $(r, \{\theta\}, FV(\theta))$.
 \item $\theta$ is satisfiable in a constant domain model.
 \end{itemize}
 \end{theorem}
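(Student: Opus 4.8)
The statement is the equivalence between existence of an open constant tableau for $\theta$ and constant-domain satisfiability of $\theta$. I would prove the two directions separately, mirroring the structure of the increasing-domain argument in the proof of Theorem~\ref{thm: Bundle increasing decidability}, but with the fixed domain $D_\theta$ playing the role that the growing set $F'$ played before.

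\emph{Soundness (open tableau $\Rightarrow$ satisfiable).} Given an open constant tableau $T=(W,V,E,\lambda)$ rooted at $(r,\{\theta\},FV(\theta))$, I would build a canonical model $M=(W,D_\theta,\delta,R,\rho)$ by taking $D=D_\theta$ (a genuine constant domain, since every $C$ used along the tableau satisfies $C\subseteq D_\theta$), $wRv$ iff $v=wv'$ for some $v'$, $\rho(w,P)=\{\overline{x}\mid P\overline{x}\in\Gamma\}$ where $\lambda(t_w)=(w,\Gamma,C)$, and $\delta(w)=D_\theta$ throughout. The heart is the truth lemma: for every tree node $w$ with $\lambda(t_w)=(w,\Gamma,C)$ and every $\alpha\in\Gamma$, $(M,w,id)\models\alpha$, proved by subtree induction. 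Literals are handled by the definition of $\rho$ together with openness; the $(\land)$ and $(\lor)$ cases are routine. The interesting case is $(\BR)$: for $\exists x_j\Box\phi_j\in\Gamma$ the rule substitutes a \emph{fresh} witness $x_j^{k_j}\in\Var_{x_j}\setminus C$ into every child, so by Proposition~\ref{clean-BR} and its corollary $x_j^{k_j}$ is free in $\phi_j[x_j^{k_j}/x_j]$ and occurs nowhere else; using Proposition~\ref{prop.handy} one converts the induction hypothesis ``$\phi_j[x_j^{k_j}/x_j]$ holds at every child'' into ``$x_j^{k_j}$ is a witness, and at every successor $M\models\phi_j$ with $x_j$ mapped to it'', giving $\exists x_j\Box\phi_j$ at $w$. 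For $\forall y_i\Diamond\psi_i$ the rule creates, for each $y\in D_\theta$, a successor carrying $\psi_i[y/y_i]$, so every domain element gets its diamond-successor, establishing $\forall y_i\Diamond\psi_i$. Hence $(M,r,id)\models\theta$.

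\emph{Completeness (satisfiable $\Rightarrow$ open tableau).} Starting from a constant-domain model $M=(W,D,R,\delta,\rho)$ and $w$ with $M,w,\eta\models\theta$, I would show that all tableau rules preserve satisfiability of the labelled sets, whence there is an open tableau (openness follows because a satisfiable literal set contains no complementary pair). The key step is again $(\BR)$. Suppose $\Gamma=\{\exists x_j\Box\phi_j\mid j\le n\}\cup\{\forall y_i\Diamond\psi_i\mid i\le m\}\cup\{r_1\ldots r_s\}$ is satisfied at $(M,w,\eta)$. The existentials give witnesses $a_1,\dots,a_n\in D$ with $M,v\models\phi_j$ (with $x_j$ mapped to $a_j$) for \emph{every} successor $v$ of $w$; the $\forall y_i\Diamond$ formulas give, for each $c\in D$, a successor $v_i^c$ with $M,v_i^c\models\psi_i$ (with $y_i$ mapped to $c$). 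Now I reinterpret the fresh variables: extend the assignment so that $x_j^{k_j}\mapsto a_j$ for the particular fresh variable the rule picks at this node (consistently, because cleanliness plus the disjointness of the $\Var_{x_j}$'s guarantees that along any root-to-node path each $x_j^k$ is assigned at most once), and map $y$ (the branch parameter) to $c$. Using Proposition~\ref{prop.handy} to commute these renamings with satisfaction, each child node $(wv^y_{y_i}:\{\phi_j[x_j^{k_j}/x_j]\mid j\le n\},\psi_i[y/y_i],C')$ is satisfied at $M,v_i^c$ under this extended assignment, and $C'\subseteq D_\theta$ maps into $D$. Thus $(\BR)$ preserves satisfiability.

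\emph{Main obstacle.} The delicate point is the bookkeeping of the finitely many fresh variables: one must argue that $h=$ modal depth many copies $\{x^1,\dots,x^h\}$ per existential variable $x$ always suffice, i.e.\ that along any branch the rule never runs out of a fresh $x^{k}\in\Var_x\setminus C$, and that the reinterpretations of these variables introduced in the completeness direction are globally consistent along each branch. Both rest on Proposition~\ref{clean-BR} (cleanliness is preserved) and on the observation that an $\exists x\Box$-subformula is ``used'' at most once along any path and only strictly below its first occurrence, so at most $h-1$ members of $\Var_x$ can have been consumed by $\forall y\Diamond$-branching before it. Making this counting argument precise — and checking that the witness variable the rule deterministically selects in soundness is exactly the one the completeness construction can afford to bind — is the part that requires care; everything else is a routine adaptation of the increasing-domain proof.
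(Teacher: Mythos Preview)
Your proposal is correct and matches the paper's proof (Appendix~B) in both structure and detail: soundness builds the canonical constant-domain model on $D_\theta$ and proves the truth lemma by subtree induction via Propositions~\ref{clean-BR} and~\ref{prop.handy}, and completeness shows that each rule---in particular $(\BR)$---preserves satisfiability of the label set. One small recalibration: the completeness argument is purely \emph{local} (each rule application preserves satisfiability, with a possibly different assignment at the child), so the global-consistency worry in your ``main obstacle'' paragraph is not needed; the point the paper \emph{does} handle explicitly, and which you elide, is the case distinction on whether the branch parameter $y\in D_\theta$ coincides with one of the freshly chosen witness variables $x_j^{k_j}$ at that same $(\BR)$ application---when it does, $y$ is forced to take the value of the existential witness $a_j$, and one must check that $\psi_i[y/y_i]$ still holds at the corresponding successor.
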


 \begin{proof}

\noindent \textbf{Soundness of tableau construction}: \\
Given an open constant tableau $T = (W,V,E,\lambda)$ from the root node labelled
$(r,\{ \theta\},FV(\theta))$, we define $M=\{W, D_\theta, R, \rho\}$ where: 
 \begin{itemize}
 \item $w R v$ iff $v=wv'$ for some $v'$.
 \item $\overline{x}\in \rho(w, P)$ iff $P\overline{x} \in \Gamma$, where
$\lambda(t_w) = (w, \Gamma)$.
 \end{itemize}
 
By definition, $D_\theta$ is not empty. Further, $\rho$ is well-defined 
due to the openness of $T$. As before, we prove that $M,r$ is indeed a 
model of $\theta$, and this is proved by the following claim.

\paragraph{Claim.} For any tree node $w$ in $T$ if $\lambda(t_w) =
(w, \Gamma,C)$ and if $\alpha \in \Gamma$ then $(M,w,id_C) \models \alpha$.
(Again, we abuse notation and write $(M,w,C) \models \alpha$
for $(M,w,id_C) \models \alpha$.)

The proof proceeds by subtree induction on the structure of $T$. The base case
is when the node considered is a leaf node and hence it is also the last node
with that label. The definition of $\rho$ ensures that the literals are
evaluated correctly in the model. 

For the inductive step, the cases for application of conjunction
and disjunction rules are standard.

Consider the branching node $(w: \Gamma)$ where  
\[\Gamma=\{ \existsBox{x_1}\phi_1 \dots \existsBox{x_n} \phi_n, \forallDiamond{y_1}\psi_1 \dots \forallDiamond{y_m}\psi_m, r_1,\dots r_s\}. \]
By induction hypothesis, 
$$M, wv_{y_i}^y, C'(wv_{y_i}^y) \vDash \psi_i[y\slash y_i]\land \bigwedge_{1}^{n}\phi_j[x^{k_j}_j \slash x_j]$$ 
for every $y\in D_\theta$ and $i\in [1, m]$. We need to show that 
$M, w, C \vDash \phi$ for each $\phi \in \Gamma$. 

The assertion for literals in $\Gamma$ follows from the definition of $\rho$.
For each $\existsBox{x_j} \phi_j\in\Gamma$ and each $wv_{y_i}^y$, with $y \in D_\theta$,
we have $M, wv_{y_i}^y, C'(wv_{y_i}^y) \vDash \phi_j[x^{k_j}_j \slash x_j]$ by 
induction hypothesis. It is clear that $\{x^{k_j}_j\mid 1\leq j\leq n\}$ are not 
free in $\phi_j$ since they are chosen to be new. Hence, by Proposition \ref{prop.handy},
$M, wv_{y_i}^y, id_C[x_j\mapsto x^{k_j}_j] \vDash \phi_j$ for all $wv^y_{y_i}$. 
Therefore $M, w, id_C\vDash \existsBox{x_j} \phi_j$.

For $\forall y_i\Diamond \psi_i\in\Gamma$, and $y \in D_\theta$, by induction
hypothesis, we have $M,wv_{y_i}^y, C'(wv_{y_i}^y) \models \psi_i [y\slash y_i]$. 
By Proposition \ref{clean-BR} and its corollary, $y_i$ is not free in $\psi_i[ y\slash y_i]$ 
and hence  by Proposition \ref{prop.handy}, 
$M, wv^y_{y_i}, id_C[y_i\mapsto y]\vDash \psi_i$. Since this holds for each 
$y\in D_\theta$, we get $M, w, id_C \vDash \forall {y_i}\Diamond\psi_i$ for each $i$.

 Thus, it follows that $M,r,\sigma(r) \vDash \theta$. 

\medskip

\noindent \textbf{Completeness of tableau construction}: \\
We need to show that rule applications preserve the satisfiability of the formula set.
The proof is as before, we only discuss the $\BR$ case.

Consider a label set $\Gamma$ of clean formulas at a branching node. Let
$$\Gamma= \{\existsBox{x_j}\phi_j \mid j\in [1,n]\}  \cup  
\{\forallDiamond{y_i}\psi_i \mid j \in [1,m]\}\cup \{r_1\dots r_s\}$$ 
be satisfiable in a model $M=\{W, D, R, \rho\}$, $w \in W$ and an
assignment $\eta$ such that $M, w, \eta \vDash \phi$ for all $\phi \in \Gamma$.

By the semantics:
\begin{itemize}
\item[(A)] there exist $c_1, \dots c_{n} \in D$ such that for all $v \in W$, 
if $wRv$ then $M,v, \eta[x_j \mapsto c_j] \models \phi_j$.
 \item[(B)] for all $c\in D^M$ there exist $v_1^c \dots v^c_{m} \in W$, 
successors of $w$ such that $M,v^c_i,\eta[y_i \mapsto c] \models \psi_i$ 
for each $i\in[1, m]$.
 \end{itemize}

By cleanliness of formulas in $\Gamma$, each $x_j$ is free only in $\phi_j$, 
and each $y_i$ is free only in $\psi_i$. Thus we can merge the assignments 
without changing the truth values of $\phi_j$ and $\psi_i$, and obtain:
\begin{itemize}
\item[(B')] for all $c\in D$ there exist $v_1^c \dots v^c_{m} \in W$, successors of $w$,
such that $$M,v^c_i,\eta[\overline{x_j} \mapsto \overline{c_j}, y_i \mapsto c] \models \phi_1\land\dots \land \phi_n\land \psi_i, i\in[1, m].$$
 \end{itemize}

Fixing a $y\in D_\theta$ and an $i\in[1,m]$, in the following we show that 
$\{\phi_j[x^{k_j}_j\slash x_j]\mid 1\leq j\leq n\}, \psi_i[y\slash y_i]\}$ 
is satisfiable. There are two cases to be considered: 

\begin{itemize}

\item $y$ is not one of $x^{k_j}_j$. First since $\eta$ is an assignment for all 
the variables in $\Var$, we can suppose $\eta(y)=b\in D$. By $(B')$ above, there 
exists a successor $v^b_i$ of $w$ such that 
$$M,v^b_i,\eta[\overline{x_j} \mapsto \overline{c_j}, y_i \mapsto b] \models \phi_1\land\dots \land \phi_n\land \psi_i$$. Note that $\overline{x_j}$ and $y_i$ are not in 
$D_\theta$ thus they are different from $y$. On the other hand, by cleanliness of
$\Gamma$, $y_i$ does not occur in $\phi_j$ and $\eta(y)=b$, hence:
$$M,v^b_i,\eta[\overline{x_j} \mapsto \overline{c_j}] \models \phi_1\land\dots \land \phi_n\land \psi_i[y\slash y_i].$$ Finally, since each $x_j$ only occurs in $\phi_j$ and each 
$x^{k_j}_j$ does not occur in $\phi_1\dots \phi_j$ and $\psi_i[y\slash y_i]$, we have:
$$M,v^b_i,\eta[\overline{x^{k_j}_j} \mapsto \overline{c_j}] \models \phi_1[x^{k_1}_1\slash x_1]\land\dots \land \phi_n[x^{k_n}_n\slash x_n]\land \psi_i[y\slash y_i].$$

\item $y$ is $x^{k_j}_j$ for some $j$. Then we pick $c_j$, the witness for $x_j$, and by (B'), $$M,v^{c_j}_i,\eta[\overline{x_j} \mapsto \overline{c_j}, y_i \mapsto c_j] \models \phi_1\land\dots \land \phi_n\land \psi_i.$$ 
Since $y$ is $x^{k_j}_j$ then $$M,v^{c_j}_i,\eta[\overline{x_j} \mapsto \overline{c_j},x^{k_j}_j\mapsto c_j] \models \phi_1\land\dots \land \phi_n\land \psi_i[y\slash y_i].$$ 
Now proceeding similarly as in the case above we can show that:
$$M,v^{c_j}_i,\eta[\overline{x^{k_j}_j} \mapsto \overline{c_j}] \models \phi_1[x^{k_1}_1\slash x_1]\land\dots \land \phi_n[x^{k_n}_n\slash x_j]\land \psi_i[y\slash y_i].$$
 \end{itemize}
This completes the proof of the theorem.

  \end{proof}

\end{document}